%
%
\documentclass[runningheads]{SpringSTYLE15pg/llncs}
\let\proof\relax 
\let\endproof\relax 
\usepackage{graphicx}
\usepackage{microtype}
\usepackage{graphicx}
\usepackage{subfigure}
\usepackage{booktabs} 
\usepackage{mathtools,amsmath,amssymb,amsbsy}
\usepackage{times}  
\usepackage{helvet}  
\usepackage{courier}  
\usepackage{url}  
\usepackage{graphicx} 

\usepackage{natbib}
\usepackage{booktabs}       
\usepackage{amsfonts}       
\usepackage{nicefrac}       
\usepackage{microtype}      
\usepackage{enumerate}
\usepackage{microtype}
\usepackage[]{algorithm2e}
\usepackage{algorithmic}
\usepackage{graphicx}
\usepackage{subfigure}
\usepackage{booktabs} 
\usepackage{mathtools,amsmath,amssymb,amsbsy}
\usepackage{algorithmic}
\usepackage{amsthm}
\usepackage{color}
\usepackage{csquotes}
\usepackage{enumitem}
\usepackage{graphicx}
\usepackage{times}
\usepackage{helvet}
\usepackage{courier}
\usepackage{tcolorbox}
\usepackage{IEEEtrantools}

\usepackage[cal=boondoxo]{mathalfa}

\newtheorem*{remark-non}{Remark}

\providecommand{\customgenericname}{}
\newcommand{\newcustomtheorem}[2]{%
  \newenvironment{#1}[1]
  {%
   \renewcommand\customgenericname{#2}%
   \renewcommand\theinnercustomgeneric{##1}%
   \innercustomgeneric
  }
  {\endinnercustomgeneric}
}

\newcustomtheorem{customtheorem}{Theorem}
\newcustomtheorem{customlemma}{Lemma}
\newcustomtheorem{customproposition}{Proposition}
\newcustomtheorem{customdefinition}{Definition}
\usepackage{hyperref}


%

\begin{document}
\title{Cutting Your Losses: Learning Fault-Tolerant Control and Optimal Stopping under Adverse Risk}
%
\titlerunning{Cutting Your Losses}
%
\author{David Mguni\inst{1}}
\authorrunning{David Mguni}
%
\institute{PROWLER.io, Cambridge, UK. 
\email{davidmg@prowler.io}\\
}
\maketitle              

\begin{abstract}
Recently, there has been a surge in interest in safe and robust techniques within reinforcement learning (RL). 
Current notions of risk in RL fail to capture the potential for systemic failures such as abrupt stoppages from system failures or surpassing of safety thresholds and the appropriate responsive controls in such instances. We propose a novel approach to fault-tolerance within RL in which the controller learns a policy can cope with adversarial attacks and random stoppages that lead to failures of the system subcomponents. The results of the paper also cover fault-tolerant (FT) control so that the controller learns to avoid states that carry risk of system failures. By demonstrating that the class of problems is represented by a variant of stochastic games, we prove the existence of a solution which is a unique fixed point equilibrium of the game and characterise the optimal controller behaviour. We then introduce a value function approximation algorithm that converges to the solution through simulation in unknown environments. 
\end{abstract}

\section{Introduction}
Reinforcement learning (RL) provides the promise of adaptive agents being able to discover solutions merely through repeated interaction with their environment. RL has been deployed in a number of real-world settings in which, using RL, an adaptive agent learns to perform complex tasks, often in environments shared by human beings. Large scale factory industrial applications, traffic light control \citep{arel2010reinforcement}, robotics \citep{deisenroth2013survey} and autonomous vehicles \citep{shalev2016safe}  are notable examples of settings to which RL methods have been applied.

Numerous automated systems are however, susceptible to failures and unanticipated outcomes. Moreover, many real-world systems amenable to RL suffer the potential for random stoppages and abrupt failures; actuator faults, failing mechanical system components, sensor failures are few such examples. In these settings, executing preprogrammed behaviours or policies that have been trained in idealised simulated environments can prove vastly inadequate for the task of ensuring the safe execution of tasks. Consequently, in the presence of such occurrences, the deployment of RL agents introduces a risk of catastrophic outcomes whenever the agent is required to act so as to avoid adverse outcomes in unseen conditions. The important question of how to control the system in a way that is both robust against systemic faults and, minimises the risk of faults or damage therefore arises.

In response to the need to produce RL algorithms that execute tasks with safety guarantees, a significant amount of focus has recently been placed on safe execution, robust control and risk-minimisation \citep{garcia2015comprehensive}. Examples include $H_\infty$ control \citep{morimoto2001robust}, coherent risk, conditional value at risk \citep{tamar2015policy}. 
In general, these methods introduce an objective\footnote{With a Lagrangian approach, constraints are captured in the construction of the Lagrangian.} defined with an expectation measure that either penalises actions that lead to greater uncertainty or embeds a more pessimistic view of the world (for example, by biasing the transition predictions towards less desirable states). In both cases, the resulting policies act more cautiously over the horizon of the problem as compared to policies trained with a standard objective function.  

Despite the recent focus on safe methods within RL, the question of how to train an RL agent that can cope with random failures remains unaddressed. In particular, at present the question of how to produce an RL policy that can cope with an abrupt failure of some system subcomponent has received no systematic treatment. Similarly, the task of addressing how to produce RL policies that account for the risk of states in which such failures occur has not been addressed.


In this paper, we for the first time produce a method that learns optimal  policies in response to random and adversarial systems attacks that lead to stoppages of system (sub)components that may produce adverse events. Our method works by introducing an adversary that seeks to determine a stopping criterion to stop the system at states that lead to the worst possible (overall) outcomes for the controller. Using a game-theoretic construction, we then show how a policy that is robust against adversarial attacks that lead to abrupt failure can be learned by an adaptive agent using an RL updating method. In particular, the introduction of an adversary that performs attacks at states that lead to worst outcomes generates experiences for the adaptive RL agent to learn a \textit{best-response policy} against such scenarios. 

To tackle this problem, we construct a novel two-player stochastic game (SG) in which one of the players, the controller, is delegated the task of learning to modify the system dynamics through its actions that maximise its payoff and an adversary or `stopper' that enacts a strategy that stops the system in such a way that maximises the controller's costs. This produces a framework that finds optimal policies that are robust against stoppages at times that pose the greatest risk of catastrophe. 

The main contribution of the paper is to perform the first systematic treatment of the problem of robust control under worst-case failures. In particular, we perform a formal analysis of the game between the controller and the stopper.  Our main results are centered around a minimax proof that establishes the existence of a value of the game.  This is necessary for simulating the stopping action to induce fault-tolerance. Although minimax proofs are well-known in game theory \citep{shapley1953stochastic,maitra1970stochastic,filar1991nonlinear}, replacing a player's action set with stopping rules necessitates a minimax proof (which now relies on a construction of open sets) which markedly differs to the standard methods within game theory. Additionally, crucial to our analysis is the characterisation of the adversary optimal stopping rule (Theorem \ref{stopping_time_theorem}).

Our results tackle optimal stopping problems (OSPs) under worst-case transitions. OSPs are a subclass of optimal stochastic control (OSC) problems in which the goal is to determine a criterion for stopping at a time that maximises some state-dependent payoff \citep{peskir2006optimal}.

The framework is developed through a series of theoretical results: first, we establish the existence of a value of the game which characterises the payoff for the saddle point equilibrium (SPE). Second, we prove a contraction mapping property of a Bellman operator of the game and that the value is a unique fixed point of the operator. Third, we prove the existence and characterise the optimal stopping time. We then prove an equivalence between the game of control and stopping and worst-case OSPs and show that the fixed point solution of the game solves the OSP.

Finally, using an approximate dynamic programming method, we develop a simulation-based iterative scheme that computes the    optimal controls. The method applies in settings in which neither the system dynamics nor the reward function are known. Hence, the agent need only observe its realised rewards by interacting with the environment. 
\subsection{Related Work}
At present, the coverage of FT within RL is limited. In \citep{zhang2018reinforcement} RL is \textit{applied} to tackle systems in which faults might occur and subsequently incur a large cost. Similarly, RL is applied to a problem in \citep{yasuda2006homogeneous} in which an RL method for Bayesian discrimination which is used to segment the state and action spaces. Unlike these methods in which infrequent faults from the environment generate negative feedback, our method introduces an adversary that performs the task of simulating high-cost stoppages (hence, modelling faults) that induce an FT trained policy.

A relevant framework is a two-player optimal stopping game (Dynkin game) in which each player chooses one of two actions; to stop the game or continue \citep{dynkin1967game}. Dynkin games have generated a vast literature since the setting requires a markedly different analysis from standard SG theory. In the case with one stopper and one controller such as we are concerned with, the minimax proof requires a novel construction using open sets to cope with the stopping problem for the minimax result. Presently, the study of optimal control that combines control and stopping is limited to a few studies e.g. \cite{chancelier2002combined}. Similarly, games of control and stopping have been analysed in continuous-time \citep{bayraktar2011minimizing,baghery2013optimal,mguni2018viscosity}. In these analyses, all aspects of the environment are known and in general, solving these problems requires computing analytic solutions to non-linear partial differential equations which are often analytically insoluble and whose solutions can only be approximated numerically at very low dimensions.  

 Current iterative methods in OSPs (and approximated dynamic programming methods e.g. \cite{bertsekas2008approximate}) in unknown environments are restricted to risk-neutral settings \citep{tsitsiklis1999optimal} --- introducing a notion of risk (generated adversarially) adds considerable difficulty as it requires generalisation to an SG involving a controller and stopper which alters the proofs throughout. In particular, the solution concept is now an SG SPE, the existence of which must be established. As we show, our framework provides an iterative method of solving OSPs with worst-case transitions in unknown environments and hence, generalises existing OSP analyses to incorporate a notion of risk.   


\noindent\textbf{Organisation}

The paper is organised as follows: we firstly give a formal description of the FT RL problem we tackle and  the OSP with worst-case transitions and give a concrete example to illustrate an application of the problem. In Sec. \ref{section_SGs}, we introduce the underlying SG framework which we use within the main theoretical analysis which we perform in Sec. \ref{main_analysis}. Lastly, in Sec. \ref{approx_scheme}, we develop an approximate dynamic programming approach that enables the optimal controls to be computed through simulation, followed by some concluding remarks.

We now describe the main problem with which we are concerned that is, FT RL. We later prove an equivalence between
the OSPs under worst-case transitions and the FT RL problem and characterise the solution of each problem.

\subsection{Fault-Tolerant Reinforcement Learning} \label{sec:FT_RL_overview}
We concern ourselves with finding a policy that copes with abrupt system stoppages and failures at the worst possible states.  Unlike standard methods in RL and game theory that have fixed time horizons (or purely random exit times) in the following,  the process is stopped by a fictitious adversary that uses a stopping strategy or \textit{rule} to decide when to stop given its state observations. In order to generate an FT control, we simulate the adversary's action whilst the controller determines its optimal policy. This as we show, induces a form of control that is an FT best-response control.

A formal description is as follows: an agent exercises actions that influence the sequence of states visited by the system. At each state, the agent receives a reward which is dependent on the state and the chosen action. The agent's actions are selected by a policy $\pi:\mathcal{S}\times\mathcal{A}\to[0,1]$ --- a map from the set of states $\mathcal{S}$ and the set of actions $\mathcal{A}$ to a probability. We assume that the action set is a discrete compact set and that the agent's policy $\pi$ is drawn from a compact policy set $\Pi$. The horizon of the problem is $T\in\mathbb{N}\times\{\infty\}$. However, at any given point $\tau_S\leq T$ the system may stop (randomly) and the problem terminates where $\tau_S\sim f(\{0,\ldots,T\})$ is a measurable, random exit time and $f$ is some distribution on $\{0,\ldots,T\}$. If after $k\leq T$ time steps the system stops, the agent incurs a cost of $G(s_k)$ and the process terminates.

For any $s\in\mathcal{S}$ and for any $\pi\in\Pi$, the agent's \textbf{performance function} is given by:
\begin{align}\label{main_performance_function}
    J^{\tau_S,\pi}[s]=\mathbb{E}\left[\sum_{t=0}^{\tau_S\wedge T}\gamma^tR(s_t,a_t)+\gamma^{\tau_S\wedge T}G(s_{\tau_S\wedge T})\Bigg|s_0=s, a_t\sim \pi,\tau_S\sim f(\{0,\ldots,T\})\right],
\end{align}
where $a\wedge b:=\min\{a,b\}$, $\mathbb{E}$ is taken w.r.t. the transition function $P$. The performance function (\ref{main_performance_function}) consists of a \textbf{reward function} $R:\mathcal{S}\times\mathcal{A}\to\mathbb{R}$ which quantifies the agent's immediate reward when the system transitions from one state to the next, a \textbf{bequest function} $G:\mathcal{S}\to\mathbb{R}$ which quantifies the penalty incurred by the agent when the system is stopped and $\gamma\in [0,1[$, a discount factor. We assume $R$ and $G$ are bounded and measurable.

The FT control problem which we tackle is one in which the controller acts both with concern for abrupt system failures and stoppages. In particular, the analysis is performed in sympathy with addressing the problem of how the controller should act in two scenarios --- the first involves acting in environments that are susceptible to adversarial attacks or random stoppages in high costs states.  Such situations are often produced in various real-world scenarios such as engine failures in autonomous vehicles, network power failures and digital (communication) networks attacks. The second scenario involves a controller that seeks to avoid system states that yield a high likelihood of systemic (subcomponent) failure. Examples of this case include an agent that seeks to avoid performing tasks that increase the risk of some system failure, for example increasing stress that results in component failure or breakages within robotics.

To produce a control that is robust in these scenarios, it is firstly necessary to determine a stopping rule that stops the system at states that incur the highest overall costs. Applying this stopping rule to the system subsequently induces a response by the  controller that is robust against systemic faults at states in which stopping inflicts the greatest overall costs.  This necessitates a formalism that combines an OSP to determine an optimal (adversarial) stopping rule and secondly, a RL problem. Hence, problem we consider is the following:

$\quad$ Find $(\hat{k},\hat{\pi})\in\mathcal{V}   \times \Pi$ and $J^{\hat{k},\hat{\pi}}$ s.th.
\begin{equation}
    \underset{\pi\in\Pi}{\max}\left(\min_{k\in\mathcal{V}   }J^{k,\pi}[s]\right)=J^{\hat{k},\hat{\pi}}[s],\qquad \forall s \in \mathcal{S},\label{game_minimax_problem}
\end{equation}
where the minimisation is taken pointwise and $\mathcal{V}  $ is a set of stochastic processes of the form $v:\Omega\to\mathcal{T}$ where $\mathcal{T}\subseteq \{0,1,2\ldots\}$ is a set of stopping times. Note that by abusive of notation, we use $J^{{v},{\pi}}$ to mean $J^{{\tau},{\pi}}$ for any $\mathcal{T}\ni{\tau}\sim v$ where $v\in\mathcal{V}$. We also hereon employ the following shorthand $R({s,a})\equiv R_{s}^a$ for any $s\in\mathcal{S}$ and for any $a\in\mathcal{A}$. 

The dual objective (\ref{game_minimax_problem}) consists of finding both a stopping rule that minimises $J$ and an optimal policy that maximises $J$. By considering the tasks as being delegated to two individual \textit{players}, the problem becomes an SG between a controller that seeks to maximise $J$ by manipulating state visitations through its actions and an adversarial stopper that chooses a stopping rule to stop the process in order to minimise $J$. We later consider a setting in which neither player has up-front knowledge of the transition model or objective function but each only observes their realised rewards.

The results of this paper also tackle OSPs under a worst-case transitions --- problems in which the goal is to find a stopping rule $\hat{\tau}$ under the adverse \textit{non-linear expectation} $\mathcal{E}_P:=\underset{\pi\in\Pi}{\min}\;\mathbb{E}_{P,\pi}$ s.th.
    \begin{align}
    \hat{\tau}\in\underset{k\in\mathcal{V}   }{\arg\max}\mathcal{E}_P\left[\sum_{t=0}^{k\wedge T}\gamma^tR(s_t,a_t)+\gamma^{k\wedge T}G(s_{k\wedge T})\right].\label{robust_optimal_stopping_time}
\end{align}

Here, the agent seeks to find an optimal stopping time in a problem in which the system transitions according to an adversarial (worst-case) probability measure.
\subsection{Example: Control with random actuator failure}\label{section_examples}
To elucidate the ideas, we now provide a concrete practical example namely that of actuator failure within RL applications. 

Consider an adaptive learner, for example a robot that uses a set of actuators to perform actions. Given full operability of its set of actuators, the agent's actions are determined by a policy $\pi:S\times A\to [0,1] $ which maps from the state space $S$ and the set of actions $A$ to a probability. In many systems, there exists some risk of actuator failure at which point the agent thereafter can affect the state transitions by operating only a \textit{subset} of its actuators. In this instance, the agent's can only execute actions drawn from a subset of its action space $\hat{A}\subset A$ and hence, the agent is now restricted to policies of the form $\pi_{\rm partial}:{S}\times \hat{A}\to [0,1]$ --- thereafter its expected return is given by the value function $V^{\pi_{\rm partial}}$ (this plays the role of the bequest function $G$ in (\ref{main_performance_function})). In order to perform robustly against actuator failure, it is therefore necessary to consider a set of stopping times $\mathcal{T}\subseteq \{0,1,2,\ldots\}$ and a stopping criterion $\hat{\tau}:\Omega\to \mathcal{T}$ which determines the worst states for the agent's functionality to be impaired so that it can only use some subset of its set of actuators. 

The problem involves finding a pair  $(\hat{\tau},\hat{\pi})\in\mathcal{V}   \times\Pi$ --- a stopping time and policy s.th.
\begin{align}\nonumber
&\hspace{-3 mm}\underset{k'\in\mathcal{V}   }{\min}\left(\underset{\pi'\in\Pi}{\max}\;\mathbb{E}\left[H^{\pi',k'}(s)\right]\right)=\mathbb{E}\left[H^{\hat{\pi},\hat{\tau}}(s)\right];\qquad \forall s\in\mathcal{S},\label{example_objective}
\end{align}
where $s:=s_0, a_t\sim\pi'$ and $
H^{\pi,k}(s):=\sum_{t=0}^{k\wedge \infty}\gamma^tR(s_t,a_t)+\gamma^{k\wedge \infty}V^{\pi_{\rm partial}}(s_{k\wedge \infty})$. Hence the role of the adversary is to determine and execute the stopping action $\hat{\tau}$ that leads to the greatest reduction in the controller's overall payoff. The controller in turn learns to execute the policy $\hat{\pi}$ which involves playing a policy $\hat{\pi}_{\rm partial}\in \arg\max V^{{\pi}_{\rm partial}}$ after the adversary has executed its stopping action. The resulting policy $\hat{\pi}$ is hence robust against actuator failure at the worst possible states.

Embedded within problem (\ref{example_objective}) is an interdependence between the actions of the players --- that is, the solution to the problem is jointly determined by the actions of both players and their responses to each other. The appropriate framework to tackle this problem is therefore an SG \citep{shapley1953stochastic}.
 
\section{Discrete-Time Stochastic Games of control and stopping} \label{section_SGs}
In this setting, the state of the system is determined by a stochastic process $\{s_t|t=0,1,2,\ldots\}$ whose values are drawn from a state space $\mathcal{S}\subseteq \mathbb{R}^p$ for some $p\in\mathbb{N}$. 
The state space is defined on a probability space $(\Omega,\mathcal{B},P)$, where $\Omega$ is the sample space, $\mathcal{B}$ is the set of events and $P$ is a map from events to probabilities. We denote by $\mathcal{F}=(\mathcal{F}_n)_{n\geq 0}$ the filtration over $(\Omega,\mathcal{B},P)$ which  is an increasing family of $\sigma-$algebras generated by the random variables $s_1,s_2,\ldots$. We operate in a Hilbert space $\mathcal{V}$ of real-valued functions on $\mathbb{L}_2$,  i.e. a complete\footnote{A vector space is complete if it contains the limit points of all its Cauchy sequences.} vector space which we equip with a norm $\|\cdot\|:\mathcal{V}\to\mathbb{R}_{>0}\times\{0\}$ given by $\|f\|_{\mu}:=\sqrt{\mathbb{E}_{\mu}[f^2(s)}]$ and its inner product $\langle f,{f}^T\rangle_{\mu}:=\mathbb{E}_{\mu}\left[f(s){f}^T(s)\right]$ where $\mu:\mathcal{B}(\mathbb{R}^n)\to[0,1]$ is a probability measure. The problem occurs over a time interval $\{0,\ldots K\}$ where $K\in \mathbb{N}\times\{\infty\}$ is the time horizon. A \textbf{stopping time} is defined as a random variable $\tau:\Omega\to\{0,\ldots,K\}$ for which $\{\omega\in\Omega|\tau(\omega)\leq t\}\in \mathcal{F}_t$ for any $t\in \{0,\ldots,K\}$ --- this says that given the information generated by the state process, we can determine if the stopping criterion has occurred.  

An SG is an augmented Markov decision process which proceeds by two players tacking actions that \textit{jointly} manipulate the transitions of a system over $K$ rounds which may be infinite. At each round, the players receive some immediate reward or cost which is a function of the players' joint actions. The framework is zero-sum so that a reward for player 1 simultaneously represents a cost for player 2.

Formally, a two-player zero-sum SG is a $6-$tuple $\langle \mathcal{S},\mathcal{A}_{i\in\{1,2\}},P,R,\gamma\rangle$ where $\mathcal{S}=\{s_1,s_2,\ldots,\,s_n\}$ is a set of $n\in\mathbb{N}$ states, $\mathcal{A}_i$ is an action set for each player $i\in\{1,2\}$. The map $P:\mathcal{S}\times\mathcal{A}_1\times\mathcal{A}_2\times\mathcal{S}\to[0,1]$ is a Markov transition probability matrix i.e. $P(s';s,a_1,a_2)$ is the probability of the state $s'$ being the next state given the system is in state $s$ and actions $a_1\in\mathcal{A}_1$ and $a_2\in\mathcal{A}_2$ are applied by player 1 and player 2 (resp.). The function $R:\mathcal{S}\times\mathcal{A}_1\times\mathcal{A}_2$ is the one-step reward for player 1 and represents one-step cost for player 2 when player 1 takes action $a_1\in\mathcal{A}_1$ and player 2 takes action $a_2\in\mathcal{A}_2$ and $\gamma\in [0,1[$ is a discount factor. The goal of each player 1s to maximise its expected cumulative return --- since the game is antagonistic, the total expected reward received by player 1 which we denote by $J$, represents a total expected cost for player 2.

Denote by $\Pi_i$, the space of strategies for each player $i\in\{1,2\}$ . 
%
For standard SGs with Markovian transition dynamics, we can safely dispense with path dependencies in the space of strategies. In particular, it is well-known that for SGs, an equilibrium exists in Markov strategies even when the opponent can draw from non-Markovian strategies \citep{hill1979existence}.\footnote{There are some exceptions for games with payoff structures not considered here for example, limiting average (Ergodic) payoffs \citep{blackwell1968big}.} Consequently, we focus on the class of behavioural strategies that depend only on the current state and round, namely \textbf{Markov strategies}, hence for each player $i$, the strategy space $\Pi_i$ consists of strategies of the form $\pi_i:\mathcal{S}\times\mathcal{A}_i\to [0,1]$. 


In SGs, it is usual to consider the case $\mathcal{A}_1=\mathcal{A}_2$ so that the players' actions are drawn from the same set. We depart from this model and consider a game in which player 2 can choose a strategy which determines a time to stop the process contained within the set $\mathcal{T}\subseteq\{0,1,2,\ldots\}$ which consists of $\mathcal{F}-$ measurable stopping times. In this setting, player 1 can manipulate the system dynamics by taking actions drawn from $\mathcal{A}_1$ (we hereon use $\mathcal{A}$) and at each point, player 2 can decide to intervene to stop the game.

Let us define by ${\rm val}^+[J]:=\underset{k\in\mathcal{V}   }{\min}\hspace{0.2 mm}\underset{\pi\in\Pi}{\max}\;J^{k,\pi}$ the \textit{upper value function} and by ${\rm val}^-[J]:=\underset{\pi\in\Pi}{\max}\hspace{0.2 mm}\underset{k\in\mathcal{V}   }{\min}\;J^{k,\pi}$, the \textit{lower value function}. The upper (lower) value function represents the minimum payoff that player 1 (player 2) can guarantee itself irrespective of the actions of the opponent.  

The \textbf{value} of the game exists if we can commute the $\max$ and $\min$ operators:
\begin{align}
    {\rm val}^-[J]&=\underset{\pi\in\Pi}{\max}\min_{k\in\mathcal{V}   }J^{k,\pi}
    =\min_{k\in\mathcal{V}   }\underset{\pi\in\Pi}{\max}\;J^{k,\pi}={\rm val}^+[J].
\end{align}
We denote the value by $J^\star:={\rm val}^+[J]={\rm val}^-[J]$ and denote by $(\hat{k},\hat{\pi})\in\mathcal{V}   \times\Pi$ the pair that satisfies $J^{\hat{k},\hat{\pi}}\equiv J^\star$. The value, should it exist, is the minimum payoff each player can guarantee itself under the equilibrium strategy. In general, the functions ${\rm val}^+[J]$ and ${\rm val}^-[J]$ may not coincide. Should $J^\star$ exist, it constitutes an SPE of the game in which neither player can improve their payoff by playing some other control --- an analogous concept to a Nash equilibrium for the case of two-player zero-sum games. Thus the central task to establish an equilibrium involves unambiguously assigning a value to the game, that is proving the existence of $J^\star$.


\section{Main Analysis}\label{main_analysis}

In this section, we present the key results and perform the main analysis of the paper. Our first task is to prove the existence of a value of the game. This establishes a fixed or stable point which describes the equilibrium policies enacted by each player. Crucially, the equilibrium describes the maximum payoff that the controller can expect in an environment that is subject to adversarial attacks that stop the system or some subcomponent.  Unlike standard SGs with two controllers, introducing a stopping criterion requires an alternative analysis in which i) an equilibrium with Markov strategies in which one of the players uses a stopping criterion is determined and ii) the stopping criterion is characterised. It is well-known that introducing a stopping action to one of the players alters the analysis of SGs the standard methods of which cannot be directly applied (c.f. Dynkin games \citep{dynkin1967game}). 

Our second task is to perform an analysis that enables us to construct an approximate dynamic programming method. This enables the value function to be computed through simulation. This, as we show in Sec. \ref{approx_scheme}, underpins a simulation-based scheme that is suitable for settings in which the transition model and reward function is a priori unknown. Lastly, we construct an equivalence between robust OSPs and games of control and stopping. We defer some of the proofs to the appendix.

Our results develop the theory of risk within RL to cover instances in which the agent has concern the process at a catastrophic system state. Consequently, we develop the theory of SGs to cover games of control and stopping when neither player has up-front environment knowledge. We prove an equivalence between robust OSPs and games of control and stopping and demonstrate how each problem can be solved in unknown environments.






A central task is to prove that the Bellman operator for the game is a contraction mapping. Thereafter, we prove convergence to the unique value. 
Consider a Borel measurable function which is absolutely integrable w.r.t. the transition kernel $P^{\cdot}$ then $
    \mathbb{E}\left[J[s']|\mathcal{F}_t\right]=\int_\mathcal{S}J[s']P^a_{ss'}$, where $P^a_{ss'}\equiv P(s';s,a)$ is the probability of the state $s'$ being the next state given the action $a\in\mathcal{A}$ and the current state is $s$ .
In this paper, we denote by $
(PJ)(s):=\int_\mathcal{S}J[s']P^a_{sds'}$.

We now introduce the operator of the game which is of central importance:
\begin{align}
\hspace{-2 mm}TJ[s]:=\min\left\{\max_{a\in A} R^a_s+\gamma\sum_{s'\in\mathcal{S}}P^a_{ss'}J^{\tau,\pi}[s'],G(s)\right\},\qquad \forall s\in\mathcal{S}\label{bellman_op}
\end{align}

The operator $T$ enables the game to be broken down into a sequence of sub minimax problems. It will later play a crucial role in establishing a value iterative method for computing the value of the game.

\begin{proposition}\label{prop_bellman_contraction}
The operator $T$ in (\ref{bellman_op}) is a contraction.
\end{proposition}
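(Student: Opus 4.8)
The plan is to show that $T$ is a $\gamma$-contraction on the Banach space $(\mathcal{V},\|\cdot\|_\mu)$ — or, most cleanly, under the supremum norm $\|f\|_\infty:=\sup_{s\in\mathcal{S}}|f(s)|$, for which $R$ and $G$ bounded guarantees $T$ maps bounded functions to bounded functions. Concretely, I would fix two candidate value functions $J_1,J_2$ and bound $\|TJ_1-TJ_2\|_\infty$ by $\gamma\|J_1-J_2\|_\infty$, whence Banach's fixed point theorem delivers a unique fixed point together with geometric convergence of the iterates $T^nJ_0$.

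The key steps, in order, are as follows. First I would invoke the elementary nonexpansiveness of the pointwise $\min$ and $\max$ operators: for any real-valued maps $\phi,\psi$ over a common index set one has $|\min\{a,c\}-\min\{b,c\}|\le|a-b|$ and $|\sup_x\phi(x)-\sup_x\psi(x)|\le\sup_x|\phi(x)-\psi(x)|$. Applying the first inequality with the common second argument $G(s)$ removes the outer $\min$ at no cost, reducing the problem to bounding the difference of the inner terms $\max_{a\in A}\bigl(R^a_s+\gamma(PJ_1)(s)\bigr)$ and $\max_{a\in A}\bigl(R^a_s+\gamma(PJ_2)(s)\bigr)$. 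Second, the nonexpansiveness of $\max$ cancels the common reward $R^a_s$ and leaves $\gamma\,\bigl|(PJ_1)(s)-(PJ_2)(s)\bigr|$ under a supremum over $a$. Third, since $P^a_{ss'}$ is a probability kernel ($\sum_{s'}P^a_{ss'}=1$, $P^a_{ss'}\ge 0$), the averaging operator $P$ is itself nonexpansive: $|(PJ_1)(s)-(PJ_2)(s)|\le\sum_{s'}P^a_{ss'}|J_1(s')-J_2(s')|\le\|J_1-J_2\|_\infty$. Chaining these bounds and taking the supremum over $s$ yields $\|TJ_1-TJ_2\|_\infty\le\gamma\|J_1-J_2\|_\infty$, and since $\gamma\in[0,1[$ this is a strict contraction.

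I do not expect a genuinely hard obstacle here — this is the standard Bellman-contraction template, and the only real subtlety is that the operator combines a $\min$, a $\max$, and an expectation rather than a single $\max$. The care needed is therefore purely in justifying that each layer is nonexpansive and applying them in the right order so the common terms ($G(s)$ and $R^a_s$) cancel cleanly. A secondary point worth stating is the measurability/boundedness bookkeeping: because $R$ and $G$ are assumed bounded and measurable and $\gamma<1$, $T$ indeed maps $\mathcal{V}$ into itself, so the contraction acts on a complete space and Banach's theorem applies. If one prefers to work in the stated $\mathbb{L}_2$ norm $\|\cdot\|_\mu$ rather than $\|\cdot\|_\infty$, the same argument goes through using Jensen's inequality on the kernel $P$, provided $\mu$ is chosen compatibly with $P$ (e.g. a stationary or suitably dominating measure) so that $\|PJ\|_\mu\le\|J\|_\mu$; I would note this but carry out the cleaner supremum-norm computation as the main line.
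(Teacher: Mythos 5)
Your proof is correct, and its skeleton is the same as the paper's: strip the outer $\min$ using nonexpansiveness against the common argument $G(s)$, strip the $\max$ using $|\max_a f(a)-\max_a g(a)|\le\max_a|f(a)-g(a)|$ so the common reward $R^a_s$ cancels, and then use that the stochastic kernel $P$ is nonexpansive to pull out the factor $\gamma$. The one substantive difference is the choice of norm. You carry out the main computation in $\|\cdot\|_\infty$ and relegate the $\mathbb{L}_2(\mu)$ case to a closing remark; the paper does the reverse, working directly in the $\|\cdot\|_\mu$ norm of its Hilbert-space setting, with Lemmas A.2 and A.3 supplying the $\min$/$\max$ nonexpansiveness in norm and Lemma A.4 proving $\|PV_1-PV_2\|\le\|V_1-V_2\|$ via iterated expectations and Jensen's inequality under the stationarity assumption A.1 --- exactly the ``compatibility of $\mu$ with $P$'' you allude to. This is not a cosmetic distinction in this paper: the later results (uniqueness of the fixed point in $\mathbb{L}_2$ in Theorem \ref{fixed_point_theorem}, and especially the approximation bounds of Theorems \ref{theorem_approx_convergence} and \ref{theorem_approx_prog_bounds}, which rely on an orthogonal projection and the Pythagorean theorem) live in the Hilbert space, so the contraction is needed in $\|\cdot\|_\mu$, not $\|\cdot\|_\infty$. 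Your sup-norm argument buys simplicity and dispenses with stationarity, but to serve the paper's subsequent machinery you should promote your final remark to the main line of the proof.
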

\begin{proof}
We wish to prove that:
\begin{align}
\|TJ-T\bar{J}\|_{\pi}
\leq \gamma\|J-\bar{J}\|.
\end{align}
Firstly, we observe that:
\begin{align}\nonumber
&\begin{aligned}\Bigg\|\max_{a\in A}& \left\{R^a_s+\gamma\sum_{s'\in\mathcal{S}}P^a_{ss'}J^{\tau,\pi}[s'],G(s_k)\right\}-\left(\max_{a\in A}\left\{ R^a_s+\gamma\sum_{s'\in\mathcal{S}}P^a_{ss'}\bar{J}^\pi[s'],\bar{G}(s_k)\right\}\right)\Bigg\|
\end{aligned}
\\&\nonumber\leq \gamma\max_{a\in A}\left\|\sum_{s'\in\mathcal{S}}P^a_{ss'}\left(J^{\tau,\pi}_{s-1}[s']-\bar{J}^\pi_{s-1}[s']\right)\right\|
\leq \gamma\left\|J^{\tau,\pi}_{s-1}-\bar{J}^\pi_{s-1}\right\|,
\end{align}
using Cauchy-Schwartz (and that $\gamma\in [0,1[$) and (\ref{basic_max_result}). The result follows after applying Lemma \ref{max_min_inequality_2} and Lemma \ref{max_triple_inequality}.
\end{proof}

We now briefly  discuss strategies. A player strategy is a map from the opponent's policy set to the player's own policy set. In general, in two player games the player who performs an action first employs the use of a strategy. Typically, this allows the player to increase its rewards since their action is now a function of the other player's later decisions.  
Markov controls use only information about the current state and duration of the game rather than using information about the opponent's decisions or the game history.  Seemingly, limiting the analysis to Markov controls in the current game may restrict the abilities of the players to perform optimally.

Our first result however proves the existence of the value in Markov controls:
\begin{theorem}\label{existence_theorem}
\begin{align}
    {\rm val}^+[J]={\rm val}^-[J]\equiv J^\star.
\end{align}
\end{theorem}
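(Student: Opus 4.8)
The plan is to prove that the upper and lower value functions coincide by exploiting the contraction property of $T$ established in Proposition \ref{prop_bellman_contraction} together with a saddle-point argument tailored to the asymmetry between the controller (who optimises over $\pi\in\Pi$) and the stopper (who selects a stopping rule $k\in\mathcal{V}$). First I would note that by definition ${\rm val}^-[J]\le {\rm val}^+[J]$ always holds, since for any fixed strategy pair the inner $\min$ is dominated by the inner $\max$; this is the routine ``weak duality'' direction and follows from the general inequality $\max_\pi\min_k \le \min_k\max_\pi$. The substance of the theorem is the reverse inequality ${\rm val}^+[J]\le {\rm val}^-[J]$, which forces equality and hence the existence of $J^\star$.

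\emph{The key steps} I would carry out are as follows. Since $T$ is a contraction on the complete space $\mathcal{V}$ (Proposition \ref{prop_bellman_contraction}), the Banach fixed point theorem guarantees a unique fixed point $J^\dagger=TJ^\dagger$, obtained as the limit of the iterates $T^nJ_0$ from any $J_0$. The core of the argument is to show that this fixed point simultaneously equals both ${\rm val}^+[J]$ and ${\rm val}^-[J]$. To do this I would work at the level of the one-step operator: because $T$ applies the outer $\min$ over the stopping decision $\{$continue, stop$\}$ \emph{after} resolving the inner $\max_{a\in A}$ over the controller's action at each state, the scalar minimax at each individual state is trivially solvable (it is a $\min$ of two real numbers, one of which, $G(s)$, is a constant not depending on $a$), so $\min$ and $\max$ commute pointwise within a single application of $T$. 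The task is then to lift this per-stage commutation to the full game value. I would argue that the iterate $T^nJ_0$ equals the value of the truncated $n$-stage game (in which the stopper must stop by round $n$), for which the finite minimax theorem applies stagewise by backward induction, and then pass to the limit: $\lim_n T^nJ_0 = J^\dagger$ by contraction, while the truncated game values converge to both ${\rm val}^+$ and ${\rm val}^-$ by a squeezing argument using boundedness of $R$ and $G$ and $\gamma\in[0,1[$.

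\emph{The main obstacle} I anticipate is the limiting argument that identifies the common fixed-point limit with the genuine infinite-horizon upper and lower values, rather than merely with the limit of finite-truncation values. Concretely, one must show that restricting the stopper to stopping times bounded by $n$ changes ${\rm val}^+[J]$ and ${\rm val}^-[J]$ by at most $O(\gamma^n)$, uniformly over the controller's strategies; this is where the construction using open sets alluded to in the introduction becomes relevant, since the stopper's feasible set $\mathcal{V}$ consists of $\mathcal{F}$-measurable stopping rules and one must verify that an approximately optimal bounded stopping rule can always be extracted. The boundedness of $R$ and $G$ together with discounting should control the tail, so the estimate $\|{\rm val}^{\pm}[J] - {\rm val}^{\pm}_n[J]\|\le C\gamma^{n}/(1-\gamma)$ should close the gap. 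Once both truncated sequences are squeezed against the single contraction limit $J^\dagger$, we obtain ${\rm val}^+[J]=J^\dagger={\rm val}^-[J]\equiv J^\star$, completing the proof.
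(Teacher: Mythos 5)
Your proposal is correct in outline, but it proves the theorem by a genuinely different route from the paper. The paper's own proof is a direct sandwich argument: after the routine weak-duality inequality (\ref{max_min_ineq_standard}), it bounds the candidate equilibrium payoff $\mathbb{E}[J^{\hat{\tau},\hat{\pi}}[s]]$ from below by ${\rm val}^+[J]$ and from above by ${\rm val}^-[J]$ through two chains of inequalities that use the nested open-set construction $\{D_m\}$, Fatou's lemma, stationarity and dominated convergence; the fixed-point machinery of $T$ plays no role there and is deferred to Theorem \ref{fixed_point_theorem}. You instead run a Shapley-style value-iteration argument: per-stage commutation of $\min$ and $\max$ (which holds precisely because the stopping payoff $G(s)$ does not depend on the controller's action), backward induction identifying $T^nG$ with the value of the horizon-$n$ truncated game, the uniform tail estimate of order $\gamma^n/(1-\gamma)$ from boundedness of $R$ and $G$, and the Banach fixed point theorem applied to Proposition \ref{prop_bellman_contraction}. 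This is sound, and in fact the paper itself relies on both ingredients you need --- the identity $\underset{\tau}{\min}\,\underset{\pi}{\max}\,J^{\tau\wedge m,\pi}=T^mG$, invoked as ``standard arguments of dynamic programming'' in (\ref{dynamic_prog_recursion}), and the truncation constant $c(m)$ --- in its proofs of Theorems \ref{fixed_point_theorem} and \ref{stopping_time_theorem}. What your route buys: it is more elementary (no open-set construction or Fatou argument is required; your worry that the open sets are needed to extract bounded near-optimal stopping rules is misplaced, since the truncation bound is a bare tail estimate holding uniformly over all stopping times), it delivers Theorem \ref{fixed_point_theorem} simultaneously rather than as a separate result, and it directly justifies the value-iteration scheme of Sec. \ref{approx_scheme}. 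What it costs: the burden shifts onto the backward-induction step, i.e. showing that the truncated game has a value when the stopper may use general adapted (not merely Markov) stopping rules and the controller history-dependent policies --- standard, but it must actually be carried out; and, unlike the paper's argument, your proof does not by itself exhibit the equilibrium pair $(\hat{\tau},\hat{\pi})$ and its payoff, which the paper reuses in Proposition \ref{saddle_pt_theorem}.
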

Theorem \ref{existence_theorem} establishes the existence of the game which permits commuting the $\max$ and $\min$ operators of the objective (\ref{game_minimax_problem}). 
Crucially, the theorem secures the existence of an equilibrium pair $(\hat{\tau},\hat{\pi})\in\mathcal{V}   \times\Pi$, where $\hat{\pi}\in\Pi$ is the controller's optimal \textbf{Markov} policy when it faces adversarial attacks that stop the system. 
Additionally, Theorem \ref{existence_theorem} establishes the existence of a  given by $J^\star$, the computation of which, is the subject of the next section. 

\begin{proof}[Proof of Theorem 1]
We begin by noting the following inequality holds:
 \begin{align}
&{\rm val}^+[J]
=\underset{\tau\in\mathcal{T}}{\min}\;\underset{\pi\in\Pi}{\max}\;\mathbb{E}[J^{\tau,\pi}[s]]\geq \underset{\pi\in\Pi}{\max}\;\underset{\tau\in\mathcal{T}}{\min}\;\mathbb{E}[J^{\tau,\pi}[s]]
={\rm val}^-[J].\label{max_min_ineq_standard}
\end{align}
 The inequality follows by noticing $J^{k,\pi}\leq\underset{\pi\in\Pi}{\max}\;J^{k,\pi}$ and thereafter applying the $\min_{k\in\mathcal{T}}$ and ${\max}_{\pi\in\Pi}$ operators.
 
The proof can now be settled by reversing the inequality in (\ref{max_min_ineq_standard}). To begin, choose a sequence of open intervals $\{D_{m}\}_{m=1}^\infty$ s.th. for each $m=1,2,\ldots$ $\bar{D}_{m}$ is compact and $\bar{D}_{m}\supset \bar{D}_{m+1}$ and $[0,T]=\cap_{m=1}^\infty\bar{D}_{m}$ and define $\tau_D(m):=\inf_{k\in D_{m}}\mathbb{E}[J^{k,\pi}[s_0]]$.

We now observe that:
\begin{align}
&\;\mathbb{E}[J^{\tau,\hat{\pi}}[s]]    \begin{aligned}
\nonumber=\underset{\pi\in\Pi}{\max}\;\mathbb{E}&\left[\sum_{t=0}^{\tau_D(m)}\gamma^t(R(s_t,a_t)+G(s_{\tau_D(m)}))\right]
-\mathbb{E}\left[\sum_{t=\tau}^{\tau_D(m)}\gamma^t(R(s_t,a_t)+G(s_{\tau_D(m)}))\right]
\end{aligned}
\\&\geq\begin{aligned}
\nonumber\mathbb{E}&\left[J^{\tau_D(m),\pi}[s]\right]
-\left|\mathbb{E}\left[\sum_{t=\tau}^{\tau_D(m)}\gamma^t(R(s_t,a_t)+G(s_{\tau_D(m)}))\right]\right|
\end{aligned}
\\&\begin{aligned}\nonumber\geq\mathbb{E}&\left[J^{\tau_D(m),\pi}[s]\right]
-\sum_{t=\tau}^{\tau_D(m)}\gamma^t\left|\mathbb{E}[R(s_t,a_t)]+\mathbb{E}\left[G(s_{\tau_D(m)})\right]\right|\end{aligned}
\\&\nonumber\geq\mathbb{E}\left[J^{\tau_D(m),\pi}[s]\right]-\sum_{t=\tau}^{\tau_D(m)}\gamma^t\left(\mathbb{E}\left[\left|R(s_0,\cdot)\right|\right]+\mathbb{E}\left[\left|G(s_0)\right|\right]\right)
\\&\nonumber=\mathbb{E}\left[J^{\tau_D(m),\pi}[s]\right]+\gamma^{\tau_D(m)+1}\frac{1-\gamma^{\tau-\tau_D(m)}}{1-\gamma}c
\\&=\lim_{m\to\infty}\inf\mathbb{E}[J^{\tau_D(m),\pi}[s]]
+\lim_{m\to\infty}\left[\gamma^{\tau_D(m)+1}\frac{1-\gamma^{\tau-\tau_D(m)}}{1-\gamma}\right]c
\geq\mathbb{E}[J^{\tau,\pi}[s]],\nonumber
\end{align}
where we have used the stationarity property and, in the limit $m\to \infty$ and, in the last line we used the Fatou lemma. The constant $c$ is given by $c:=(\mathbb{E}[R(s_0,\cdot)]+\mathbb{E}[G(s_0)])\in \mathbb{L}$.

Hence, we now find that
\begin{align}
    \mathbb{E}[J^{\tau,\hat{\pi}}[s]]\geq\mathbb{E}[J^{\tau,\pi}[s]].\label{part_1_summary}
\end{align} 
Now since (\ref{part_1_summary}) holds $\forall \pi \in\Pi$ we find that: 
\begin{align}
    \mathbb{E}[J^{\tau,\hat{\pi}}[s]]\geq\underset{\pi\in\Pi}{\max}\;\mathbb{E}[J^{\tau,\pi}[s]].\label{pen_step_part_1}
\end{align} 
Lastly, applying $\min$ operator we observe that:
\begin{align}
    \mathbb{E}[J^{\hat{\tau},\hat{\pi}}[s]]\geq\underset{\tau\in\mathcal{T}}{\min}\;\underset{\pi\in\Pi}{\max}\;\mathbb{E}[J^{\tau,\pi}[s]]={\rm val}^+[J].
\label{val_plus_min_max_ineq}\end{align} 
It now remains to show the reverse inequality holds:
\begin{align}
    \mathbb{E}[J^{\hat{\tau},\hat{\pi}}[s]]\leq\underset{\pi\in\Pi}{\max}\;\underset{\tau\in\mathcal{T}}{\min}\;\mathbb{E}[J^{\tau,\pi}[s]]={\rm val}^-[J].
\end{align}
Indeed, we observe that
\begin{align}
&\mathbb{E}\left[J^{\hat{\tau},\hat{\pi}}[s]\right]\label{first_proof_payoff_bound_start}
\begin{aligned}
\leq \underset{\tau\in\mathcal{T}}{\min}\;&\mathbb{E}\left[J^{\tau\wedge m,\hat{\pi}}[s]\right]
+\mathbb{E}\left[\sum_{t=m}^\infty\gamma^t\left(|R(s_t,a_t)|+|G(s_t)|\right)\right]
\end{aligned}
\\&\leq \lim_{m\to\infty}\left[\underset{\tau\in\mathcal{T}}{\min}\;\mathbb{E}\left[J^{\tau\wedge m,\hat{\pi}}[s]\right]+c(m)\right]\label{first_proof_payoff_bound_mid}
\\&= \underset{\tau\in\mathcal{T}}{\min}\;\mathbb{E}\left[J^{\tau,\hat{\pi}}[s]\right]\leq \underset{\pi\in\Pi}{\max}\;\underset{\tau\in\mathcal{T}}{\min}\;\mathbb{E}\left[J^{\tau,{\pi}}[s]\right],\label{min_max_ineq_proof}
\end{align}
since $\gamma\in [0,1[$,where $c(m):=\frac{\gamma^{m}}{1-\gamma}(\mathbb{E}[|R(s_0,\cdot)|]+\mathbb{E}[|G(s_0)|])$ (using the stationarity of the state process) and where we have used Lebesgue's Dominated Convergence Theorem in the penultimate step.

Hence, by (\ref{min_max_ineq_proof}) we have that:
\begin{align}
\hspace{-3 mm}\mathbb{E}\left[J^{\hat{\tau},\hat{\pi}}[s]\right]\leq \underset{\pi\in\Pi}{\max}\;\underset{\tau\in\mathcal{T}}{\min}\;\mathbb{E}\left[J^{\tau,{\pi}}[s]\right]={\rm val}^-[J].    \label{val_minus_max_min_ineq}
\end{align}
Hence putting (\ref{val_plus_min_max_ineq}) and (\ref{val_minus_max_min_ineq}) together gives:
\begin{align}\nonumber
&\hspace{-3 mm}{\rm val}^-[J]=\underset{\pi\in\Pi}{\max}\;\underset{\tau\in\mathcal{T}}{\min}\;\mathbb{E}\left[J^{\tau,{\pi}}[s]\right]\\&\hspace{-3 mm}\geq    \mathbb{E}[J^{\hat{\tau},\hat{\pi}}[s]]\geq\underset{\tau\in\mathcal{T}}{\min}\;\underset{\pi\in\Pi}{\max}\;\mathbb{E}[J^{\tau,\pi}[s]]={\rm val}^+[J].\label{min_max_ineq_unstandard}
\end{align} 
After combining (\ref{min_max_ineq_unstandard}) with (\ref{max_min_ineq_standard}) we deduce the thesis.
\end{proof}
We can now establish the optimal strategies for each player. To this end, we now define best-response strategies which shall be useful for further characterising the equilibrium:
\begin{definition}\label{BR_strategies}The set of \textbf{best-response (BR) strategies} for player 1 \textit{against the stopping time} $\tau\in\mathcal{V}   $ (BR strategies for player 2 \textit{against the policy} $\pi\in\Pi$) is defined by: 
\begin{align}
\hat{\pi}\in\underset{\pi'\in\Pi}{\arg\hspace{-0.35 mm}\max}\;\mathbb{E}[J^{\tau,\pi'}[s]] \quad \text{(resp.,} \hat{\tau}\in\underset{\tau'\in\mathcal{V}   }{\arg\hspace{-0.35 mm}\min}\;\mathbb{E}[J^{\tau',\pi}[s]]), \qquad \forall s\in\mathcal{S}.    
\end{align}
\end{definition}    
The question of computing the value of the game remains. To this end, we now prove that repeatedly applying $T$ produces a sequence that converges to the value. 
In particular, the game has a \textit{fixed point property} which is stated in the following:
\begin{theorem}\label{fixed_point_theorem}
1. The sequence $(T^nJ)_{n=0}^\infty$ converges (in $\mathbb{L}_2$).\label{limit_statement_theorem}\\
2. There exists a unique function $J^\star\in \mathbb{L}_2$ s.th.
\begin{align}
    J^\star=TJ^\star\;\;\text{and}\;\; \underset{n\to\infty}{\lim}T^nJ=J^\star.
\end{align}
\end{theorem}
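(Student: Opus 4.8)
The plan is to recognise both claims as a direct instance of the Banach fixed point theorem, leveraging the contraction property already secured in Proposition \ref{prop_bellman_contraction} together with the completeness of the ambient space $\mathbb{L}_2$ introduced in Section \ref{section_SGs}. The entire argument reduces to three routine steps: showing the iterates are Cauchy, invoking completeness to obtain a limit, and using continuity plus the contraction bound to identify that limit as the unique fixed point.

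First I would establish that the iterates form a Cauchy sequence. Starting from an arbitrary $J\in\mathbb{L}_2$, Proposition \ref{prop_bellman_contraction} gives $\|T^{k+1}J-T^kJ\|\leq\gamma\|T^kJ-T^{k-1}J\|$, so by induction $\|T^{k+1}J-T^kJ\|\leq\gamma^k\|TJ-J\|$. For any $m>n$, a telescoping sum and the triangle inequality then yield
\begin{align}
\|T^mJ-T^nJ\|\leq\sum_{k=n}^{m-1}\gamma^k\|TJ-J\|\leq\frac{\gamma^n}{1-\gamma}\|TJ-J\|.
\end{align}
Since $\gamma\in[0,1[$, the right-hand side vanishes as $n\to\infty$, so $(T^nJ)_{n=0}^\infty$ is Cauchy. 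Because $\mathbb{L}_2$ is complete, the sequence converges to a limit which I denote $J^\star\in\mathbb{L}_2$, settling the first claim.

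Next I would verify that $J^\star$ is a fixed point. A contraction is Lipschitz continuous, hence $T$ commutes with the limit, giving $TJ^\star=T(\lim_{n\to\infty}T^nJ)=\lim_{n\to\infty}T^{n+1}J=J^\star$. For uniqueness, suppose $J^\star$ and $\tilde{J}$ both satisfy $J=TJ$; then $\|J^\star-\tilde{J}\|=\|TJ^\star-T\tilde{J}\|\leq\gamma\|J^\star-\tilde{J}\|$, and since $\gamma<1$ this forces $\|J^\star-\tilde{J}\|=0$, so $J^\star=\tilde{J}$. The same bound shows the limit is independent of the initial iterate $J$, which is what makes the value iteration scheme of Section \ref{approx_scheme} well defined.

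The main obstacle does not lie in this theorem itself — once contraction and completeness are in hand the argument is textbook — but rather resides entirely in Proposition \ref{prop_bellman_contraction}, whose proof must control the interaction of the outer $\min$ against the bequest term $G(s)$ with the inner $\max$ over actions inside $T$. A minor point worth flagging is that the fixed point $J^\star$ obtained here should be identified with the game value of Theorem \ref{existence_theorem}: since $J^\star=TJ^\star$ encodes precisely the saddle-point recursion underlying ${\rm val}^+[J]={\rm val}^-[J]$, the two coincide, so the iterative scheme indeed converges to the equilibrium payoff.
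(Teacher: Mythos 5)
Your Banach fixed-point argument is correct and, on its own terms, more carefully written than the paper's treatment of that step: the paper's Part 1 compresses the Cauchy-plus-completeness reasoning into a single line, whereas you spell out the telescoping bound, the role of completeness of $\mathbb{L}_2$, the continuity of $T$, and uniqueness. If the symbol $J^\star$ in the statement were merely a fresh name for ``the fixed point,'' your proof would be finished.

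However, in this paper $J^\star$ is not a fresh name: it was defined in Sec.~\ref{section_SGs} as the value of the game, $J^\star:={\rm val}^+[J]={\rm val}^-[J]$, whose existence is Theorem~\ref{existence_theorem}, and the discussion following the theorem states explicitly that the fixed point coincides with the value of the game. That identification is the substantive content here, and it is precisely what you assert without proof in your closing remark (``since $J^\star=TJ^\star$ encodes the saddle-point recursion\ldots the two coincide''). A priori, the unique fixed point of the one-step operator $T$ could differ from the infinite-horizon game value; ruling this out is an argument, not a tautology, and it is where the paper spends most of its proof. Concretely, the paper interprets $T^nJ$ as the value of the $n$-stage game with terminal payoff $J$, reuses the estimates (\ref{first_proof_payoff_bound_start})--(\ref{first_proof_payoff_bound_mid}) from the proof of Theorem~\ref{existence_theorem} to obtain the two-sided bound $J^\star[s]-\frac{\gamma^{n}}{1-\gamma}c\leq T^nJ[s]\leq J^\star[s]+\frac{\gamma^{n}}{1-\gamma}c$, applies $T$ once more --- which preserves these inequalities by the monotonicity and constant-shift properties of Lemma~\ref{T_operator_properties} --- and concludes by the sandwich theorem that $\lim_{n\to\infty}T^nJ=J^\star$ and $TJ^\star=J^\star$. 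To close the gap in your proposal you would need either this estimate, or a dynamic programming principle showing that the game value of Theorem~\ref{existence_theorem} itself satisfies $J^\star=TJ^\star$; with that in hand, your uniqueness step would then complete the identification.
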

Theorem \ref{fixed_point_theorem} establishes the existence of a fixed point of $T$ and that the fixed point coincides with the value of the game. Crucially, it suggests that $J^\star$ can be computed by an iterative application of the Bellman operator which underpins a value iterative method. We study this aspect in Sec. \ref{approx_scheme} where we develop an iterative scheme for computing $J^\star$.
\begin{proof}[Proof of Theorem \ref{fixed_point_theorem}]
\underline{\textit{Part 1:}} We note that the contraction property of $T$ (c.f. Prop. \ref{prop_bellman_contraction}) allows us to demonstrate that the game has a unique fixed point to which a sequence $(T^nJ)_{n=0}^\infty$ converges (in $\mathbb{L}_2$). In particular, by Prop. 1 we have that $\|T^2J-TJ\|\leq\gamma\|TJ-J\|$ which proves that the sequence $(T^nJ)_{n=0}^\infty$ converges to a fixed point. 

\underline{\textit{Part 2:}} We observe that the fixed point is unique since if $\exists J,M\in \mathbb{L}_2$ s.th. $TJ=J$ and $TM=M$ we find that $\|M-J\|=\|TM-TJ\|=\gamma\|M-J\|$, so that $M=J$ (since $\gamma\in [0,1[$) which gives the desired result.

Adopting notions in dynamic programming, denote by:
\begin{align}
    &T^nJ[s]\nonumber
    =\underset{\tau\in\mathcal{T}}{\min}\underset{\pi_0,\pi_1,\ldots,\pi_{n-1}}{\max}\mathbb{E}\left[\sum_{t=0}^{\{n-1\wedge\tau\}}\gamma^tR(s_t,a_t)+\gamma^nJ(s_{n\wedge\tau})\right].
\end{align}
We begin the proof by invoking similar reasoning as (\ref{first_proof_payoff_bound_start}) - (\ref{first_proof_payoff_bound_mid}) to deduce that:
\begin{align}\nonumber
&\mathbb{E}\left[J^{\hat{\tau},\hat{\pi}}[s]\right]\leq \underset{\tau\in\mathcal{T}}{\min}\;\mathbb{E}\left[J^{\tau\wedge n,\hat{\pi}}[s]\right]+\frac{\gamma^{n}}{1-\gamma}c,
\end{align}
where $c:=(\mathbb{E}[|R(s_0,\cdot)|]+\mathbb{E}[|G(s_0)|])$.
Hence,
\begin{align}
&T^nJ[s]\leq \underset{\pi\in\Pi}{\max}\;\underset{\tau\in\mathcal{T}}{\min}\;\mathbb{E}\left[J^{\tau,\pi}[s]\right]+\frac{\gamma^{n}}{1-\gamma}c
= J^\star[s]+\frac{\gamma^{n}}{1-\gamma}c. \label{T_op_upper_bound}
\end{align}
By analogous reasoning we can deduce that:
\begin{align}
&T^nJ[s]\geq \underset{\tau\in\mathcal{T}}{\min}\;\underset{\pi\in\Pi}{\max}\;\mathbb{E}\left[J^{\tau,\pi}[s]\right]-\frac{\gamma^{n}}{1-\gamma}c
=J^\star[s]-\frac{\gamma^{n}}{1-\gamma}c. \label{T_op_lower_bound}
\end{align}
Putting (\ref{T_op_upper_bound}) and (\ref{T_op_lower_bound}) together implies:
\begin{align}
J^\star[s]-\frac{\gamma^{n}}{1-\gamma}c\leq T^nJ[s]\leq J^\star[s]+\frac{\gamma^{n}}{1-\gamma}c.\label{T_inequality}
\end{align}
By Lemma \ref{T_operator_properties}, i.e. invoking the monotonicity and constant shift properties of $T$, we can apply $T$ to (\ref{T_inequality}) and preserve the inequalities to give:
\begin{align}
TJ^\star[s]-\frac{\gamma^{n}}{1-\gamma}c\leq T^{n+1}J[s]\leq TJ^\star[s]+\frac{\gamma^{n}}{1-\gamma}c.\label{T_inequality_plus_1}
\end{align}
After taking the limit in (\ref{T_inequality_plus_1}) and, using the sandwich theorem of calculus, we deduce the result.
\end{proof}
\begin{definition}
The pair $(\hat{\tau},\hat{\pi})\in\mathcal{V}   \times\Pi$ is an SPE iff:
\begin{align}
    J^{\hat{\tau},\hat{\pi}}[s]= \underset{\pi\in\Pi}{\max}\;J^{\hat{\tau},\pi}[s]=\underset{\tau\in\mathcal{V}   }{\min}\;J^{\tau,\hat{\pi}}[s], \qquad \forall s\in\mathcal{S}.
\end{align}
\end{definition}
An SPE therefore defines a strategic configuration in which both players play their BR strategies. With reference to the FT RL problem, an SPE describes a scenario in which the controller optimally responds against stoppages at the set of states that inflict the greatest costs to the controller. In particular, we will demonstrate that $\hat{\pi}\in\Pi$ is a BR to a system that undergoes adversarial attacks.
\begin{proposition}\label{saddle_pt_theorem}
The pair $(\hat{\tau},\hat{\pi})\in\mathcal{V}   \times\Pi$ consists of BR strategies and constitutes an SPE.
\end{proposition}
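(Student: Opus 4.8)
```latex
The plan is to show that the equilibrium pair $(\hat{\tau},\hat{\pi})$ whose existence is guaranteed by Theorem \ref{existence_theorem} in fact satisfies the saddle-point (SPE) identities of the preceding definition, namely that each player's equilibrium strategy is a best response to the other's. The key observation is that the existence of the value $J^\star={\rm val}^+[J]={\rm val}^-[J]$ already collapses the chain of inequalities established in the proof of Theorem \ref{existence_theorem} into a sequence of equalities, and the SPE characterisation is essentially a repackaging of this collapse.

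First I would recall the two bounds derived in the proof of Theorem \ref{existence_theorem}: from (\ref{pen_step_part_1}) we have $\mathbb{E}[J^{\hat{\tau},\hat{\pi}}[s]]\geq \max_{\pi\in\Pi}\mathbb{E}[J^{\hat{\tau},\pi}[s]]$ after specialising the stopping rule to $\hat{\tau}$, while the reverse inequality $\mathbb{E}[J^{\hat{\tau},\hat{\pi}}[s]]\leq \max_{\pi\in\Pi}\mathbb{E}[J^{\hat{\tau},\pi}[s]]$ is immediate since $\hat{\pi}\in\Pi$. Combining these yields
\begin{align}
\mathbb{E}[J^{\hat{\tau},\hat{\pi}}[s]]=\underset{\pi\in\Pi}{\max}\;\mathbb{E}[J^{\hat{\tau},\pi}[s]],\qquad \forall s\in\mathcal{S},\nonumber
\end{align}
which shows $\hat{\pi}$ is a BR to $\hat{\tau}$ in the sense of Definition \ref{BR_strategies}.

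Symmetrically, I would establish $\mathbb{E}[J^{\hat{\tau},\hat{\pi}}[s]]=\min_{\tau\in\mathcal{V}}\mathbb{E}[J^{\tau,\hat{\pi}}[s]]$. The bound (\ref{val_minus_max_min_ineq}) gives $\mathbb{E}[J^{\hat{\tau},\hat{\pi}}[s]]\leq {\rm val}^-[J]$, and invoking the chain (\ref{min_max_ineq_unstandard}) together with the established value $J^\star$ forces $\mathbb{E}[J^{\hat{\tau},\hat{\pi}}[s]]=J^\star[s]=\min_{\tau\in\mathcal{V}}\mathbb{E}[J^{\tau,\hat{\pi}}[s]]$, since $\hat{\tau}$ is the minimiser attaining ${\rm val}^+$ against $\hat{\pi}$. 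This shows $\hat{\tau}$ is a BR to $\hat{\pi}$. Putting the two equalities together reproduces exactly the SPE condition
\begin{align}
J^{\hat{\tau},\hat{\pi}}[s]=\underset{\pi\in\Pi}{\max}\;J^{\hat{\tau},\pi}[s]=\underset{\tau\in\mathcal{V}}{\min}\;J^{\tau,\hat{\pi}}[s],\qquad \forall s\in\mathcal{S},\nonumber
\end{align}
so the pair consists of mutual BR strategies and is therefore an SPE.

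I expect the main obstacle to be purely bookkeeping rather than conceptual: one must carefully track that the same pair $(\hat{\tau},\hat{\pi})$ witnessing the value in Theorem \ref{existence_theorem} is the one appearing in the SPE definition, and verify that the inequalities of the earlier proof genuinely close to equalities at this pair for every $s\in\mathcal{S}$ (the pointwise qualifier matters, since the minimisation in (\ref{game_minimax_problem}) is taken pointwise). The subtlety is ensuring that attaining the outer $\max$/$\min$ in ${\rm val}^\pm$ coincides with attaining the corresponding best-response optimisation against the fixed opponent strategy; this is exactly what the commutation of operators from Theorem \ref{existence_theorem} licenses, so no new construction (e.g. the open-set argument) is needed here.
```
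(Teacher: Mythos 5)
Your proposal is correct and takes essentially the same route as the paper: both derive the BR/SPE property of $(\hat{\tau},\hat{\pi})$ directly from Theorem \ref{existence_theorem}, your version simply making explicit the collapse into equalities of the inequality chains (\ref{pen_step_part_1}) and (\ref{min_max_ineq_proof})--(\ref{val_minus_max_min_ineq}) that the paper's much terser proof (which cites Definition \ref{BR_strategies} and the commutation of the $\max$ and $\min$ operators) leaves implicit. One phrasing caveat: the clause ``since $\hat{\tau}$ is the minimiser attaining ${\rm val}^+$ against $\hat{\pi}$'' reads as assuming exactly what is to be proved; the sound justification, which your sandwich argument otherwise supplies, is $\mathbb{E}[J^{\hat{\tau},\hat{\pi}}[s]]\leq\min_{\tau\in\mathcal{V}}\mathbb{E}[J^{\tau,\hat{\pi}}[s]]\leq{\rm val}^-[J]=J^\star=\mathbb{E}[J^{\hat{\tau},\hat{\pi}}[s]]$, extracted from the chain ending at (\ref{min_max_ineq_proof}).
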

\begin{proof}
The proposition follows from the fact that if either player plays a Markov strategy then their opponent's best-response is a Markov strategy. Moreover, $\hat{\tau}$ is a BR strategy for player 2 (recall Definition 3). Moreover, by Theorem 1 (commuting the $\max$ and $\min$ operators) we observe that $\hat{\pi}$ is a BR strategy for player 1. 
\end{proof}

By Prop. \ref{saddle_pt_theorem}, when the pair $(\hat{\tau},\hat{\pi})$ is played, each player executes its BR strategy. The strategic response then induces FT behaviour by the controller.
We now turn to the existence and characterising the optimal stopping time for player 2. The following result establishes its existence. 
\begin{theorem}\label{stopping_time_theorem}
There exists an $\mathcal{F}$-measurable stopping time:\\
$
    \hat{\tau}=\min\left\{k\in\mathcal{T}   \Big|G(s_k)\leq \underset{v\in\mathcal{V}   }{\min}\hspace{0.2 mm}\underset{\pi\in\Pi}{\max} J^{v,\pi}[s_k]\right\},\;\; a.s.  
$\end{theorem}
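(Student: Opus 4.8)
The plan is to read the optimal stopping rule directly off the fixed-point characterisation of the value furnished by Theorems \ref{existence_theorem} and \ref{fixed_point_theorem}, and then verify its optimality by a Snell-envelope (martingale) argument against the equilibrium policy $\hat{\pi}$. First I would use Theorem \ref{existence_theorem} to rewrite the threshold as $\min_{v\in\mathcal{V}}\max_{\pi\in\Pi}J^{v,\pi}[s_k]=J^\star[s_k]$, and Theorem \ref{fixed_point_theorem} to invoke the fixed-point identity $J^\star[s]=\min\{\Phi^\star[s],G(s)\}$, where $\Phi^\star[s]:=\max_{a\in\mathcal{A}}\big(R^a_s+\gamma\sum_{s'}P^a_{ss'}J^\star[s']\big)$ is the continuation value. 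Two consequences are immediate: $J^\star\le G$ pointwise, so the event $\{G(s_k)\le J^\star[s_k]\}$ coincides with $\{J^\star[s_k]=G(s_k)\}=\{G(s_k)\le\Phi^\star[s_k]\}$; and this identifies the stopping region $\mathcal{D}:=\{s\in\mathcal{S}:G(s)\le J^\star[s]\}$ together with the continuation region $\mathcal{C}:=\mathcal{S}\setminus\mathcal{D}$, on which $J^\star=\Phi^\star<G$.

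For measurability, since $G$ is measurable and $J^\star\in\mathbb{L}_2$ (hence measurable) by Theorem \ref{fixed_point_theorem}, the set $\mathcal{D}$ is Borel. As $(s_k)$ is adapted to $\mathcal{F}$, the first-entry time $\hat{\tau}=\min\{k\in\mathcal{T}:s_k\in\mathcal{D}\}$ satisfies $\{\hat{\tau}\le t\}=\bigcup_{k\le t}\{s_k\in\mathcal{D}\}\in\mathcal{F}_t$, so $\hat{\tau}$ is a genuine $\mathcal{F}$-stopping time of exactly the displayed form.

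For optimality I would verify that $\hat{\tau}$ is a best response to $\hat{\pi}$ in the sense of Definition \ref{BR_strategies} and Prop. \ref{saddle_pt_theorem}, i.e. $J^{\hat{\tau},\hat{\pi}}[s]=\min_{\tau}J^{\tau,\hat{\pi}}[s]=J^\star[s]$. Working under $\hat{\pi}$, define $M_k:=\sum_{t=0}^{k-1}\gamma^tR(s_t,\hat{a}_t)+\gamma^kJ^\star[s_k]$, which mirrors the accounting in the operator (\ref{bellman_op}). The fixed-point inequality $J^\star[s]\le R^{\hat{a}}_s+\gamma\,\mathbb{E}[J^\star[s']\mid s]$ makes $(M_k)$ a submartingale, with equality — hence the stopped process a martingale — while $s_k\in\mathcal{C}$. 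For an arbitrary $\tau$ one has the decomposition $J^{\tau,\hat{\pi}}[s]=\mathbb{E}[M_\tau]+\mathbb{E}[\gamma^\tau(G(s_\tau)-J^\star[s_\tau])]$; optional stopping for submartingales gives $\mathbb{E}[M_\tau]\ge M_0=J^\star[s]$, and $G\ge J^\star$ makes the second term nonnegative, so $J^{\tau,\hat{\pi}}[s]\ge J^\star[s]$ for every $\tau$. Applying the same decomposition at $\hat{\tau}$, the martingale property up to $\hat{\tau}$ gives $\mathbb{E}[M_{\hat{\tau}}]=J^\star[s]$, while $G(s_{\hat{\tau}})=J^\star[s_{\hat{\tau}}]$ on $\mathcal{D}$ annihilates the second term, yielding $J^{\hat{\tau},\hat{\pi}}[s]=J^\star[s]$. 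Hence $\hat{\tau}$ attains the minimum and is optimal.

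The main obstacle is justifying optional stopping and the limit–expectation interchange in the infinite-horizon regime $T=\infty$, where $\hat{\tau}$ may be unbounded or even infinite on a set of positive measure. I would handle this exactly as in the proofs of Theorems \ref{existence_theorem} and \ref{fixed_point_theorem}: truncate at $\hat{\tau}\wedge n$, bound the tail by $\tfrac{\gamma^{n}}{1-\gamma}c$ with $c=\mathbb{E}[|R(s_0,\cdot)|]+\mathbb{E}[|G(s_0)|]$ using boundedness of $R,G$ together with $\gamma\in[0,1[$, and pass to the limit by dominated convergence; the same geometric tail ensures that the contribution of $\{\hat{\tau}=\infty\}$ is negligible, so the stopping rule is well defined and optimal almost surely.
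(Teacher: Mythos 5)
Your proposal is correct, but it takes a genuinely different route from the paper. The paper proves the theorem by finite-horizon truncation: it bounds $\max_\pi J^{\tau,\pi}$ by $\max_\pi J^{\tau\wedge m,\pi}$ plus a geometric tail, invokes the dynamic-programming identity $\underset{\tau\in\mathcal{T}}{\min}\,\underset{\pi\in\Pi}{\max}\,J^{\tau\wedge m,\pi}[s]=T^mG(s)$, and then lets $m\to\infty$ using the convergence $T^mG\to J^\star$ from Theorem \ref{fixed_point_theorem} together with the tail estimate $c(m)=\frac{\gamma^m}{1-\gamma}c$; the first-entry characterisation is then read off from $G(s_\tau)=J^{\tau,\cdot}[s_\tau]$. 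You instead run a Snell-envelope verification: you extract the stopping region $\mathcal{D}=\{G\leq J^\star\}$ from the fixed-point identity $J^\star=\min\{\Phi^\star,G\}$, check that the first-entry time of the Borel set $\mathcal{D}$ is an $\mathcal{F}$-stopping time, and then prove optimality via the submartingale $M_k$ and optional stopping, handling the infinite horizon with the same geometric-tail truncation the paper uses. Your route buys two things the paper's proof does not deliver explicitly: the measurability assertion in the theorem statement (which the paper never actually verifies) and a transparent demonstration that $\hat{\tau}$ is a best response to $\hat{\pi}$ in the sense of Definition \ref{BR_strategies}, tying the result directly to Prop.~\ref{saddle_pt_theorem}; the paper's route, by contrast, stays within the value-iteration machinery and so feeds directly into the algorithmic development of Sec.~\ref{approx_scheme}. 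One small point to flag: your decomposition $J^{\tau,\hat{\pi}}[s]=\mathbb{E}[M_\tau]+\mathbb{E}[\gamma^\tau(G(s_\tau)-J^\star[s_\tau])]$ presumes the reward sum runs to $\tau-1$, which matches the Bellman operator (\ref{bellman_op}) but not the performance function (\ref{main_performance_function}), whose sum includes time $\tau$; this is an inconsistency internal to the paper, but you should state which convention you adopt, since otherwise an extra term $\mathbb{E}[\gamma^\tau R(s_\tau,a_\tau)]$ appears in the decomposition.
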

The theorem characterises and establishes the existence of the player 2 optimal stopping time which, when executed by the adversary, induces an FT control by the controller. 
\begin{proof}[Proof of Theorem \ref{stopping_time_theorem}]
For any $m\in\mathbb{N}$ we have that:
\begin{align}
&\underset{\pi\in\Pi}{\max}\;J^{{\tau},{\pi}}[s]
\begin{aligned}
\geq \;&\underset{\pi\in\Pi}{\max}\;J^{\tau\wedge m,{\pi}}[s]
-\sum_{t=m}^\infty\gamma^t\underset{\pi\in\Pi}{\max}\;\left(|R(s_t,a_t)|+|G(s_t)|\right).\label{theorem_2_start}
\end{aligned}
\end{align}
We now apply the $\min$ operator to both sides of (\ref{theorem_2_start}) which gives:
\begin{align}\nonumber
\underset{\tau\in\mathcal{T}}{\min}\;\underset{\pi\in\Pi}{\max}\;J^{{\tau},{\pi}}[s]
\geq \underset{\tau\in\mathcal{T}}{\min}\;\underset{\pi\in\Pi}{\max}\;J^{\tau\wedge m,{\pi}}[s]
-\sum_{t=m}^\infty\gamma^t\underset{\pi\in\Pi}{\max}\;\left(|R(s_t,a_t)|+|G(s_t)|\right).
\end{align}
After taking expectations, we find that:
\begin{align}
&\mathbb{E}\left[\underset{\tau\in\mathcal{T}}{\min}\;\underset{\pi\in\Pi}{\max}\;J^{{\tau},{\pi}}[s]\right]
\\&\begin{aligned}
\geq \;&\mathbb{E}\left[\underset{\tau\in\mathcal{T}}{\min}\;\underset{\pi\in\Pi}{\max}\;J^{\tau\wedge m,{\pi}}[s]\right]
-\sum_{t=m}^\infty\gamma^t\mathbb{E}\left[\underset{\pi\in\Pi}{\max}\;\left(|R(s_t,a_t)|+|G(s_t)|\right)\right].\label{theorem_1_after_expectations}
\end{aligned}
\end{align}
Now by Jensen's inequality and, using the stationarity of the state process (recall the expectation is taken under $\pi$) we have that:
\begin{align}\nonumber
&\mathbb{E}\left[\underset{\pi\in\Pi}{\max}\;\left(|R(s_t,a_t)|+|G(s_t)|\right)\right]
\\&\geq \underset{\pi\in\Pi}{\max}\;\mathbb{E}\left[\left(|R(s_t,a_t)|+|G(s_t)|\right)\right]
    =\mathbb{E}[|R(s_0,\cdot)|]+\mathbb{E}[|G(s_0)|].\label{jensens_1_theorem_1}
\end{align}
By standard arguments of dynamic programming, the value of the game with horizon $n$ can be obtained from $n$ iterations of the dynamic recursion; in particular, we have that:
\begin{align}
\underset{\tau\in\mathcal{T}}{\min}\;\underset{\pi\in\Pi}{\max}\;J^{\tau\wedge m,{\pi}}[s]=T^m G(s).   \label{dynamic_prog_recursion} 
\end{align}
Inserting (\ref{jensens_1_theorem_1}) and (\ref{dynamic_prog_recursion}) into (\ref{theorem_1_after_expectations}) gives:
\begin{align}
&\mathbb{E}\left[\underset{\tau\in\mathcal{T}}{\min}\;\underset{\pi\in\Pi}{\max}\;J^{{\tau},{\pi}}[s]\right]
\geq\mathbb{E}\left[T^m G(s)\right]
-c(m)\nonumber
\\&= \lim_{m\to\infty}\left[\mathbb{E}\left[T^m G(s)\right]-c(m)\right]
= \mathbb{E}\left[J^{\hat{\tau},\hat{\pi}}[s]\right],\label{theorem_2_proof_s1}
\end{align}
where $c(m):=\frac{\gamma^{m}}{1-\gamma}(\mathbb{E}[|R(s_0,\cdot)|]+\mathbb{E}[|G(s_0)|])$ so that $\underset{m\to\infty}{\lim}c(m)=0$.
Hence, we find that:
\begin{align}
\mathbb{E}\left[J^{\hat{\tau},\hat{\pi}}[s]\right]\leq \mathbb{E}\left[\underset{\tau\in\mathcal{T}}{\min}\;\underset{\pi\in\Pi}{\max}\;J^{{\tau},{\pi}}[s]\right],
\end{align}
 we deduce the result after noting that $G(s_{\tau})=J^{\tau,\cdot}[s_{\tau}]$ by definition of $G$.
\end{proof}
Having shown the existence of the optimal stopping time $\tau^\star$, by Theorem \ref{stopping_time_theorem} and Theorem \ref{existence_theorem}, we find:
\begin{theorem}\label{optim_stop_SG_time_equiv_prop}
Let $\hat{\tau}$ be the player 2 optimal stopping time defined in (\ref{stopping_time_theorem}) and let $\tau^\star$ be the optimal stopping time for the robust OSP (c.f. (\ref{robust_optimal_stopping_time})) then $
    \tau^\star=\hat{\tau}.
$
\end{theorem}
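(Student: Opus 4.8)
The plan is to show that both $\tau^\star$ and $\hat\tau$ are the first entry time into the \emph{same} stopping region, namely $\{s : G(s)\le J^\star[s]\}$, so that identifying them reduces to proving that the robust OSP and the game share the common value $J^\star$. The tools are Theorem \ref{existence_theorem} (existence of the value and commutation of the $\max$/$\min$ operators), Theorem \ref{fixed_point_theorem} (uniqueness of the fixed point of $T$), and the explicit characterisation of the stopping boundary furnished by Theorem \ref{stopping_time_theorem}.

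First I would introduce the value of the robust problem, $V[s] := \underset{k\in\mathcal{V}}{\max}\,\mathcal{E}_P\big[\sum_{t=0}^{k\wedge T}\gamma^t R_{s_t}^{a_t}+\gamma^{k\wedge T}G(s_{k\wedge T})\big]=\underset{k\in\mathcal{V}}{\max}\,\underset{\pi\in\Pi}{\min}\,J^{k,\pi}[s]$, which is exactly the object whose maximiser defines $\tau^\star$ in (\ref{robust_optimal_stopping_time}). The central step is to prove that $V=J^\star$. To do this I would run the same truncation-and-limit machinery used in the proofs of Theorems \ref{existence_theorem} and \ref{stopping_time_theorem}: truncate the horizon at $m$, bound the discarded tail by $\tfrac{\gamma^{m}}{1-\gamma}\big(\mathbb{E}|R(s_0,\cdot)|+\mathbb{E}|G(s_0)|\big)$ using stationarity of the state process, and pass to the limit with Fatou and dominated convergence. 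The target of this computation is to show that $V$ satisfies the Bellman recursion (\ref{bellman_op}), i.e. $V=\underset{n\to\infty}{\lim}T^n G$. Once that is in place, the contraction and uniqueness of Proposition \ref{prop_bellman_contraction} and Theorem \ref{fixed_point_theorem} force that limit to be the unique fixed point $J^\star$, whence $V=J^\star$.

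With $V=J^\star$ in hand, the optimal rule for the robust OSP is to stop as soon as continuing can no longer improve on the bequest, i.e. at the first $k$ with $G(s_k)\le V[s_k]=J^\star[s_k]$; this is precisely the set appearing in Theorem \ref{stopping_time_theorem}. Since $\hat\tau=\min\{k : G(s_k)\le \underset{v}{\min}\,\underset{\pi}{\max}\,J^{v,\pi}[s_k]\}$ and $\underset{v}{\min}\,\underset{\pi}{\max}\,J^{v,\pi}=J^\star$ by Theorem \ref{existence_theorem}, the two stopping times enter the same region for the first time along every path, giving $\tau^\star=\hat\tau$ almost surely.

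The step I expect to be the main obstacle is the value identity $V=J^\star$, because the robust OSP maximises over stopping times the worst case over policies ($\max_k\min_\pi$) whereas the game value is assembled from $\min_k\max_\pi=\max_\pi\min_k$; reconciling this reversed ordering is exactly what forces one to route the argument through the Bellman recursion and the uniqueness of its fixed point (Theorem \ref{fixed_point_theorem}) rather than a bare minimax swap, and care is needed to verify that the recursion generated by the robust problem genuinely coincides with the operator $T$ of (\ref{bellman_op}). A secondary technical point is that the theorem asserts coincidence of the \emph{arg}-optima, not merely of the values: one must argue that the boundary $\{G\le J^\star\}$ is hit for the first time at the same instant in both problems, which follows once $V=J^\star$ together with the boundary characterisation of Theorem \ref{stopping_time_theorem}, but it should be recorded explicitly to conclude almost-sure equality.
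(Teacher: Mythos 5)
The paper gives no separate argument for Theorem \ref{optim_stop_SG_time_equiv_prop}: it is presented as an immediate corollary of Theorem \ref{existence_theorem} (the minimax equality) together with Theorem \ref{stopping_time_theorem} (the characterisation of $\hat{\tau}$ as the first entry time of $\{s: G(s)\le J^\star[s]\}$). Your proposal is therefore more detailed than the paper's own treatment, and you correctly isolate exactly the point the paper glosses over: the robust OSP value is $V=\max_{k}\min_{\pi}J^{k,\pi}$, whereas Theorem \ref{existence_theorem} equates $\max_{\pi}\min_{k}J^{k,\pi}$ with $\min_{k}\max_{\pi}J^{k,\pi}$ --- the roles of the two decision variables are exchanged, so no bare appeal to Theorem \ref{existence_theorem} closes the gap.

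The trouble is that your proposed bridge is the step that fails. You want to show that $V$ satisfies the recursion generated by the operator $T$ of (\ref{bellman_op}) and then invoke uniqueness of its fixed point (Theorem \ref{fixed_point_theorem}). But the dynamic programming operator of the robust OSP is not $T$: there the stopper \emph{maximises} while nature plays worst-case actions, so the one-step recursion is $\tilde{T}V=\max\bigl\{G,\,\min_{a\in\mathcal{A}}\bigl(R^a_s+\gamma PV\bigr)\bigr\}$, whereas $TJ=\min\bigl\{\max_{a\in\mathcal{A}}\bigl(R^a_s+\gamma PJ\bigr),\,G\bigr\}$. These operators genuinely differ (take $G\equiv 0$ with strictly positive continuation values: $\tilde{T}$ returns the continuation value, $T$ returns $0$), so the contraction and uniqueness machinery for $T$ says nothing about $V$; the one-step interchange of the stop/continue decision with the optimisation over $a$ is precisely what would have to be proved, and Theorem \ref{existence_theorem} does not supply it. A related mis-orientation occurs in your final step: a \emph{maximising} stopper optimally stops at the first $k$ at which the bequest is at least the continuation value, i.e. $G(s_k)\ge \min_{a}\bigl(R^a_{s_k}+\gamma PV\bigr)$, whereas the test you state --- first $k$ with $G(s_k)\le V[s_k]$ --- is the rule for a \emph{minimising} stopper; for the problem in (\ref{robust_optimal_stopping_time}) it is satisfied essentially everywhere (since $V$ dominates the stopping payoff) and cannot characterise $\tau^\star$. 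Consequently, even granting the value identity $V=J^\star$, the pathwise equality $\tau^\star=\hat{\tau}$ does not follow from your argument as written.
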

Theorem \ref{optim_stop_SG_time_equiv_prop} establishes an equivalence between the robust OSP and the SG of control and stopping hence, any method that computes $\hat{\tau}$ for the SG yields a solution to the robust OSP.   

\section{Simulation-Based Value Iteration}\label{approx_scheme}
We now develop a \textit{simulation-based} value-iterative scheme. We show that the method produces an iterative sequence that converges to the value of the game from which the optimal controls can be extracted. The method is suitable for environments in which the transition model and reward functions are not known to either player.  We defer the proofs of this section to the appendix.

The fixed point property of the game established in Theorem \ref{fixed_point_theorem} immediately suggests a solution method for finding the value. In particular, we may seek to solve the fixed point equation (FPE) $
    J^\star=TJ^\star$.
Direct approaches at solving the FPE are not generally fruitful as closed solutions are typically unavailable. To compute the value function, we develop an iterative method that tunes weights of a set of basis functions $\{\phi_k:\mathbb{R}^p\to\mathbb{R}|k\in 1,2,\ldots D\}$  to approximate $J^\star$ through
simulated system trajectories and associated costs. Algorithms of this type were first introduced by Watkins \citep{watkins1992q} as an approximate dynamic programming method and have since been augmented to cover various settings. Therefore the following can be considered as a generalised Q-learning algorithm for zero-sum controller stopper games. 

Let us denote by $\Phi r:=\sum_{j=1}^Dr(j)\phi_j$ an operator representation of the basis expansion. The algorithm
is initialised with weight vector $r_0=(r_0(1),\ldots, r_0(P))'\in\mathbb{R}^d$. Then as the trajectory $\{s_t|t=0,1,2,\ldots\}$ is simulated, the algorithm produces an updated series of vectors $\{r_t|t=0,1,2,\ldots\}$ by the update:
\begin{align}
&r_{t+1}\nonumber    =r_t+\gamma\phi(s_t)\Big(\underset{a\in\mathcal{A}}{\max}\;R^a_{s_t}+\gamma\min\left\{(\phi r_t)(s_{t+1}),G(s_{t+1})\right\}-(\phi r_t)(s_t)\Big).\label{approx_dynm_r_update}
\end{align}

Theorem \ref{theorem_approx_convergence} demonstrates that the method converges to an approximation of $J^\star$. We provide a bound for the approximation error in terms of the basis choice.

We define the function $Q^\star$ which the algorithm approximates by:
\begin{align}
    Q^\star(s)=\underset{a\in\mathcal{A}}{\max}\;R^a_s+\gamma PJ^\star[s],\qquad \forall s\in\mathcal{S}
\end{align}

We later show that $Q^\star$ serves to approximate the value $J^\star$. In particular, we show that the algorithm generates a sequence of weights $r_n$ that converge to a vector $r^\star$ and that $\Phi r^\star$, in turn approximates $Q^\star$. To complete the connection, we provide a bound between the outcome of the game when the players use controls generated by the algorithm.  

We introduce our player 2 stopping criterion which now takes the form: 
\begin{align}
\hat{\tau}=\min\{t|G(s_t)\leq Q^\star(s_t)\} \label{approx_stopping_rule}.
\end{align}
Let us define a orthogonal projection $\Pi$ and the function $F$ by the following:
\begin{align}
\Pi Q :=\underset{\bar{Q}\in\{\Phi r|r \in\mathbb{R}^p\}}{\arg\min}\|\bar{Q}-Q\|,   FQ:=\underset{a\in\mathcal{A}}{\max}\; R^a_s+\gamma P\min\{G,Q\}.    
\end{align}
We now state the main results of the section:
\begin{theorem}\label{theorem_approx_convergence} $r_n$ converges to $r^\star$ where $r^\star$ is the unique solution: $
    \Pi F(\Phi r^\star)=\Phi r^\star.$
\end{theorem}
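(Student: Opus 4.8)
The plan is to split the claim into two parts: first, that the composite operator $\Pi F$ has a \emph{unique} fixed point $\Phi r^\star$, and second, that the stochastic iterates $r_n$ generated by the update converge almost surely to $r^\star$. The first part is a Banach fixed-point argument once $\Pi F$ is shown to be a contraction in the $\mu$-weighted norm $\|\cdot\|_\mu$, where $\mu$ is the stationary distribution of the simulated state process $\{s_t\}$; the second part is an application of stochastic-approximation theory for contractive mean dynamics, where the contraction from the first part supplies the Lyapunov function. Throughout I would assume the simulated chain is ergodic with stationary distribution $\mu$ and that the basis $\{\phi_k\}$ is linearly independent, so that, viewing $\Phi$ as the feature matrix with columns $\phi_j$, the correspondence $r\mapsto\Phi r$ is injective and uniqueness of the fixed-point function transfers to uniqueness of $r^\star$.

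For the contraction of $\Pi F$, I would first establish that $F$ is a $\gamma$-contraction in $\|\cdot\|_\mu$ by an argument paralleling Prop.~\ref{prop_bellman_contraction}. The term $\max_{a\in\mathcal{A}}R^a_s$ is independent of $Q$, so the $Q$-dependence of $FQ$ enters only through $\gamma P\min\{G,Q\}$. The map $Q\mapsto\min\{G,Q\}$ is pointwise $1$-Lipschitz, hence nonexpansive: $\|\min\{G,Q_1\}-\min\{G,Q_2\}\|_\mu\le\|Q_1-Q_2\|_\mu$. Next, $P$ is nonexpansive in $\|\cdot\|_\mu$ because, by Jensen's inequality and the stationarity identity $\sum_s\mu(s)P^a_{ss'}=\mu(s')$, one has $\|Pf\|_\mu^2\le\sum_s\mu(s)\sum_{s'}P^a_{ss'}f(s')^2=\|f\|_\mu^2$. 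Composing these with the factor $\gamma$ gives $\|FQ_1-FQ_2\|_\mu\le\gamma\|Q_1-Q_2\|_\mu$. Since $\Pi$ is an orthogonal projection onto $\{\Phi r\}$ it is nonexpansive in $\|\cdot\|_\mu$, so $\Pi F$ is a $\gamma$-contraction and admits the unique fixed point $\Phi r^\star$.

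For the stochastic convergence, I would read the factor multiplying $\phi(s_t)$ in the update as a step-size sequence $\{\gamma_t\}$ obeying the Robbins--Monro conditions $\sum_t\gamma_t=\infty$, $\sum_t\gamma_t^2<\infty$, and identify the mean update field under stationarity. Writing $D=\mathrm{diag}(\mu)$, the conditional expectation of the increment, taken over $s_t\sim\mu$ and $s_{t+1}\sim P(\cdot\,|\,s_t)$, equals $\Phi^\top D\bigl(F(\Phi r)-\Phi r\bigr)$; here the Markov property ensures that $\min\{(\Phi r)(s_{t+1}),G(s_{t+1})\}$ is an unbiased sample of $\bigl(P\min\{G,\Phi r\}\bigr)(s_t)$. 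By the normal equations for the projection, the fixed-point condition $\Pi F(\Phi r^\star)=\Phi r^\star$ is equivalent to $\Phi^\top D\bigl(F(\Phi r^\star)-\Phi r^\star\bigr)=0$, so $r^\star$ is precisely the unique zero of the mean field. Taking the Lyapunov function $V(r)=\tfrac12\|\Phi r-\Phi r^\star\|_\mu^2$, the contraction of $\Pi F$ yields the strict negative-drift inequality $\langle \Phi r-\Phi r^\star,\,F(\Phi r)-\Phi r\rangle_\mu<0$ for $r\neq r^\star$, while the martingale noise has bounded conditional second moment because $R$ and $G$ are bounded. The standard stochastic-approximation theorem for contractive iterations then delivers $r_n\to r^\star$ almost surely.

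The main obstacle is the $\min\{G,Q\}$ nonlinearity induced by the stopping option: unlike ordinary TD learning, $F$ is not affine, so one must verify that $\Pi F$ remains a contraction and that the single-sample estimate stays unbiased with controlled variance despite the nonsmooth $\min$. The pointwise nonexpansiveness of $x\mapsto\min\{G,x\}$ is exactly what rescues both the contraction modulus and the drift inequality. The remaining delicate point is that the contraction holds only in the $\mu$-weighted norm, so the argument hinges on the simulated chain being ergodic with a well-defined stationary $\mu$ relative to which $P$ is nonexpansive; establishing (or assuming) this ergodicity, and the resulting validity of the steady-state expectation computation, is where the proof must be handled with care.
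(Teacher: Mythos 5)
Your proposal is correct and follows essentially the same route as the paper: prove $F$ is a $\gamma$-contraction in the stationary-distribution-weighted norm (via nonexpansiveness of $\min\{G,\cdot\}$ and of $P$, the latter by Jensen plus stationarity), compose with the nonexpansive orthogonal projection to get that $\Pi F$ is a $\gamma$-contraction with unique fixed point $\Phi r^\star$, and identify the mean update field $\Phi^\top D\left(F(\Phi r)-\Phi r\right)$ whose unique zero is $r^\star$. The only difference is one of explicitness: you spell out the Robbins--Monro step-size conditions and the Lyapunov negative-drift argument for the stochastic iteration, whereas the paper stops at the inner-product representation of the mean field and defers the convergence machinery to Tsitsiklis and Van Roy (1999).
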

The following results provide approximation bounds when employing the projection $\Pi$:
\begin{theorem}\label{theorem_approx_prog_bounds} Let $\hat{\tau}=\min\left\{k\in\mathcal{V}   \Big|G(s_k)\leq (\Phi r^\star)(s_k)\right\}$, then the following hold:
\begin{align}
\left\|\Phi r^\star-Q^\star\right\|&\leq \left(\sqrt{1-\gamma^2}\right)^{-1}\left\|\Pi Q^\star-Q^\star\right\|,
\\
\mathbb{E}\left[J^\star-J^{\tilde{\tau},\tilde{\pi}}\right]&\leq {2}{\left[(1-\gamma)\sqrt{1-\gamma^2})\right]^{-1}}\left\|\Pi Q^\star-Q^\star\right\|.
\end{align}
\end{theorem}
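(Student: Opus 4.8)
The plan is to recognise $Q^\star$ as the fixed point of the operator $F$ and then run the standard projected-fixed-point error analysis. First I would verify that $Q^\star=FQ^\star$: since Theorem \ref{fixed_point_theorem} gives $J^\star=TJ^\star=\min\{\max_{a}R^a_s+\gamma PJ^\star,G\}$, we have $J^\star=\min\{Q^\star,G\}$, whence $FQ^\star=\max_{a}R^a_s+\gamma P\min\{G,Q^\star\}=\max_{a}R^a_s+\gamma PJ^\star=Q^\star$. Next I would show $F$ is a $\gamma$-contraction in the $\mu$-weighted $\mathbb{L}_2$ norm, taking $\mu$ to be the stationary distribution of the simulated chain so that $P$ is non-expansive via Jensen's inequality: the $Q$-independent term $\max_{a}R^a_s$ cancels in a difference $FQ_1-FQ_2$, the pointwise map $\min\{G,\cdot\}$ is $1$-Lipschitz, and the factor $\gamma$ supplies the modulus. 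Together with the fact (from Theorem \ref{theorem_approx_convergence}) that $\Phi r^\star$ is the unique fixed point of $\Pi F$, this sets up both bounds.

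For the first bound I would exploit the orthogonality of the projection. Since $\Pi$ is an orthogonal projection onto $\{\Phi r\mid r\in\mathbb{R}^p\}$, the vector $\Phi r^\star-\Pi Q^\star$ lies in the range of $\Pi$ while $\Pi Q^\star-Q^\star$ is orthogonal to it, so Pythagoras gives $\|\Phi r^\star-Q^\star\|^2=\|\Phi r^\star-\Pi Q^\star\|^2+\|\Pi Q^\star-Q^\star\|^2$. I would then bound the first summand using non-expansiveness of $\Pi$, the identity $\Pi Q^\star=\Pi FQ^\star$, and the contraction of $F$:
\begin{align}
\|\Phi r^\star-\Pi Q^\star\|=\|\Pi F(\Phi r^\star)-\Pi FQ^\star\|\leq\|F(\Phi r^\star)-FQ^\star\|\leq\gamma\|\Phi r^\star-Q^\star\|.
\end{align}
Substituting into the Pythagorean identity yields $(1-\gamma^2)\|\Phi r^\star-Q^\star\|^2\leq\|\Pi Q^\star-Q^\star\|^2$, which rearranges to the claimed bound.

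For the second bound I would combine the first bound with a performance-loss argument. Writing $\tilde{Q}:=\Phi r^\star$ and letting $(\tilde{\tau},\tilde{\pi})$ be the greedy controls it induces (with $\tilde{\tau}$ the threshold stopping rule in the statement), I would compare the true value $J^\star=\min\{Q^\star,G\}$ against the value realised by acting greedily with respect to $\tilde{Q}$. A telescoping argument along the simulated trajectory, bounding at each step the discrepancy between the optimal Bellman backup of $Q^\star$ and the backup induced by the decision rule derived from $\tilde{Q}$, amplifies the one-step error $\|\tilde{Q}-Q^\star\|$ by the geometric factor $(1-\gamma)^{-1}$; the factor of $2$ arises because both the stopping threshold and the continuation value are evaluated with the perturbed $\tilde{Q}$. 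This gives $\mathbb{E}[J^\star-J^{\tilde{\tau},\tilde{\pi}}]\leq\frac{2}{1-\gamma}\|\Phi r^\star-Q^\star\|$, and inserting the first bound $\|\Phi r^\star-Q^\star\|\leq(1-\gamma^2)^{-1/2}\|\Pi Q^\star-Q^\star\|$ delivers the stated inequality.

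The main obstacle I anticipate is the performance-loss step. The first bound is routine projected-fixed-point geometry, but the second requires carefully relating the suboptimality of the greedy stopping rule $\tilde{\tau}$ to the $\mathbb{L}_2(\mu)$ approximation error of $\tilde{Q}$. This is delicate because the stopping decision is a discontinuous (threshold) function of $\tilde{Q}$, so small perturbations can in principle move the stopping region; controlling this requires a coupling or telescoping argument over trajectories, together with the contraction of $F$, to absorb the accumulated errors into the $(1-\gamma)^{-1}$ factor rather than letting them compound.
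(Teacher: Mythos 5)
Your first bound is correct and follows the paper's own route exactly: $Q^\star=FQ^\star$, $F$ is a $\gamma$-contraction in $\|\cdot\|_{\mu}$, $\Phi r^\star=\Pi F(\Phi r^\star)$ from Theorem \ref{theorem_approx_convergence}, and the Pythagorean identity combined with non-expansiveness of $\Pi$ and the contraction of $F$ yields $(1-\gamma^2)\|\Phi r^\star-Q^\star\|^2\leq\|\Pi Q^\star-Q^\star\|^2$. (In the paper this argument appears as Step 4 inside the proof of Theorem \ref{theorem_approx_convergence}.)

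The second bound is where your proposal has a genuine gap. Your target intermediate inequality $\mathbb{E}[J^\star-J^{\tilde{\tau},\tilde{\pi}}]\leq\frac{2}{1-\gamma}\|\Phi r^\star-Q^\star\|$ is exactly what the paper establishes, but the ``telescoping/coupling argument over trajectories'' that is supposed to deliver it is never constructed --- you name the obstacle (the stopping decision is a discontinuous threshold function of $\Phi r^\star$) and then assert it can be controlled. The paper's resolution needs no trajectory argument and makes the discontinuity irrelevant. Introduce the frozen-decision operator $\tilde{F}Q:=\max_{a}R^a_s+\gamma PHQ$, where $HQ(s):=G(s)$ if $G(s)\leq(\Phi r^\star)(s)$ and $HQ(s):=Q(s)$ otherwise. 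Three facts close the argument: (i) $\tilde{F}$ is a $\gamma$-contraction, since $H$ is non-expansive; (ii) its fixed point is $\tilde{Q}:=\max_{a}R^a_s+\gamma PJ^{\tilde{\tau},\tilde{\pi}}$, the Q-function of the controls actually played; (iii) $\tilde{F}(\Phi r^\star)=F(\Phi r^\star)$, because $H$ agrees with $\min\{G,\cdot\}$ when evaluated at $\Phi r^\star$ itself --- this is the precise sense in which the threshold rule is ``consistent'' with $\Phi r^\star$, and it is what dissolves your worry about the stopping region shifting. Then
\begin{align}
\|Q^\star-\tilde{Q}\|\leq\|FQ^\star-F(\Phi r^\star)\|+\|\tilde{F}\tilde{Q}-\tilde{F}(\Phi r^\star)\|\leq\gamma\left(2\|Q^\star-\Phi r^\star\|+\|Q^\star-\tilde{Q}\|\right),
\end{align}
so $\|Q^\star-\tilde{Q}\|\leq\frac{2\gamma}{1-\gamma}\|Q^\star-\Phi r^\star\|$; and by stationarity and Jensen's inequality, $\mathbb{E}[J^\star-J^{\tilde{\tau},\tilde{\pi}}]=\mathbb{E}[P(J^\star-J^{\tilde{\tau},\tilde{\pi}})]\leq\|P(J^\star-J^{\tilde{\tau},\tilde{\pi}})\|=\frac{1}{\gamma}\|Q^\star-\tilde{Q}\|$, since $Q^\star-\tilde{Q}=\gamma P(J^\star-J^{\tilde{\tau},\tilde{\pi}})$. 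Chaining these with your first bound gives the stated constant $2[(1-\gamma)\sqrt{1-\gamma^2}]^{-1}$. Your telescoping route could likely be made rigorous, but only by reconstructing ingredients (i)--(iii); as written, the proposal asserts the conclusion of the hard step rather than proving it.
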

Hence the error bound in approximation of $J^\star$ is determined by the goodness of the projection.

Theorem \ref{theorem_approx_convergence} and Theorem \ref{theorem_approx_prog_bounds} thus enable the FT RL problem to be solved by way of simulating the behaviour of the environment and using the update rule (\ref{approx_dynm_r_update}) to approximate the value function. Applying the stopping rule in (\ref{approx_stopping_rule}), by Theorem \ref{theorem_approx_prog_bounds} and Theorem \ref{fixed_point_theorem}, means the pair $(\tilde{\tau},\tilde{\pi})$ is generated where the policy $\tilde{\pi}$ approximates the policy $\hat{\pi}$ which is FT against adversarial stoppages and faults. 
\section*{Conclusion}
In this paper, we tackled the problem of fault-tolerance within RL in which the controller seeks to obtain a control that is robust against catastrophic failures. To formally characterise the optimal behaviour, we constructed a new discrete-time SG of control and stopping. We established the existence of an equilibrium value then, using a contraction mapping argument, showed that the game can be solved by iterative application of a Bellman operator and constructed an approximate dynamic programming algorithm so that the game can be solved by simulation.
%
%
%
 \bibliographystyle{{SpringSTYLE15pg/splncs04.bst}}

\begin{thebibliography}{00}


\ifx \showCODEN    \undefined \def \showCODEN     #1{\unskip}     \fi
\ifx \showDOI      \undefined \def \showDOI       #1{#1}\fi
\ifx \showISBNx    \undefined \def \showISBNx     #1{\unskip}     \fi
\ifx \showISBNxiii \undefined \def \showISBNxiii  #1{\unskip}     \fi
\ifx \showISSN     \undefined \def \showISSN      #1{\unskip}     \fi
\ifx \showLCCN     \undefined \def \showLCCN      #1{\unskip}     \fi
\ifx \shownote     \undefined \def \shownote      #1{#1}          \fi
\ifx \showarticletitle \undefined \def \showarticletitle #1{#1}   \fi
\ifx \showURL      \undefined \def \showURL       {\relax}        \fi
\providecommand\bibfield[2]{#2}
\providecommand\bibinfo[2]{#2}
\providecommand\natexlab[1]{#1}
\providecommand\showeprint[2][]{arXiv:#2}

\bibitem[\protect\citeauthoryear{Watkins and Dayan}{Watkins and Dayan}{1992}]%
        {watkins1992q}
\bibfield{author}{\bibinfo{person}{Christopher~JCH Watkins} {and}
  \bibinfo{person}{Peter Dayan}.} \bibinfo{year}{1992}\natexlab{}.
\newblock \showarticletitle{Q-learning}.
\newblock \bibinfo{journal}{{\em Machine learning\/}} \bibinfo{volume}{8},
  \bibinfo{number}{3-4} (\bibinfo{year}{1992}), \bibinfo{pages}{279--292}.
\newblock

\bibitem[\protect\citeauthoryear{Abbeel, Coates, and Ng}{Abbeel
  et~al\mbox{.}}{2010}]%
        {abbeel2010autonomous}
\bibfield{author}{\bibinfo{person}{Pieter Abbeel}, \bibinfo{person}{Adam
  Coates}, {and} \bibinfo{person}{Andrew~Y Ng}.}
  \bibinfo{year}{2010}\natexlab{}.
\newblock \showarticletitle{Autonomous helicopter aerobatics through
  apprenticeship learning}.
\newblock \bibinfo{journal}{{\em The International Journal of Robotics
  Research\/}} \bibinfo{volume}{29}, \bibinfo{number}{13}
  (\bibinfo{year}{2010}), \bibinfo{pages}{1608--1639}.
\newblock

\bibitem[\protect\citeauthoryear{Yasuda, Ohkura, and Ueda}{Yasuda
  et~al\mbox{.}}{2006}]%
        {yasuda2006homogeneous}
\bibfield{author}{\bibinfo{person}{Toshiyuki Yasuda}, \bibinfo{person}{Kazuhiro
  Ohkura}, {and} \bibinfo{person}{Kanji Ueda}.}
  \bibinfo{year}{2006}\natexlab{}.
\newblock \showarticletitle{A homogeneous mobile robot team that is
  fault-tolerant}.
\newblock \bibinfo{journal}{{\em Advanced Engineering Informatics\/}}
  \bibinfo{volume}{20}, \bibinfo{number}{3} (\bibinfo{year}{2006}),
  \bibinfo{pages}{301--311}.
\newblock

\bibitem[\protect\citeauthoryear{Zhang and Gao}{Zhang and Gao}{2018}]%
        {zhang2018reinforcement}
\bibfield{author}{\bibinfo{person}{Dapeng Zhang} {and} \bibinfo{person}{Zhiwei
  Gao}.} \bibinfo{year}{2018}\natexlab{}.
\newblock \showarticletitle{Reinforcement learning--based fault-tolerant
  control with application to flux cored wire system}.
\newblock \bibinfo{journal}{{\em Measurement and Control\/}}
  \bibinfo{volume}{51}, \bibinfo{number}{7-8} (\bibinfo{year}{2018}),
  \bibinfo{pages}{349--359}.
\newblock

\newblock \showarticletitle{Enhancing R\&D in science-based industry: An
  optimal stopping model for drug discovery}.
\newblock \bibinfo{journal}{{\em International Journal of Project
  Management\/}} \bibinfo{volume}{27}, \bibinfo{number}{8}
  (\bibinfo{year}{2009}), \bibinfo{pages}{754--764}.
\newblock

\bibitem[\protect\citeauthoryear{Filar, Schultz, Thuijsman, and Vrieze}{Filar
  et~al\mbox{.}}{1991}]%
        {filar1991nonlinear}
\bibfield{author}{\bibinfo{person}{Jerzy~A Filar}, \bibinfo{person}{Todd~A
  Schultz}, \bibinfo{person}{Frank Thuijsman}, {and} \bibinfo{person}{OJ
  Vrieze}.} \bibinfo{year}{1991}\natexlab{}.
\newblock \showarticletitle{Nonlinear programming and stationary equilibria in
  stochastic games}.
\newblock \bibinfo{journal}{{\em Mathematical Programming\/}}
  \bibinfo{volume}{50}, \bibinfo{number}{1-3} (\bibinfo{year}{1991}),
  \bibinfo{pages}{227--237}.
\newblock




\bibitem[\protect\citeauthoryear{Maitra and Parthasarathy}{Maitra and
  Parthasarathy}{1970}]%
        {maitra1970stochastic}
\bibfield{author}{\bibinfo{person}{A Maitra} {and} \bibinfo{person}{T
  Parthasarathy}.} \bibinfo{year}{1970}\natexlab{}.
\newblock \showarticletitle{On stochastic games}.
\newblock \bibinfo{journal}{{\em Journal of Optimization Theory and
  Applications\/}} \bibinfo{volume}{5}, \bibinfo{number}{4}
  (\bibinfo{year}{1970}), \bibinfo{pages}{289--300}.
\newblock

\bibitem[\protect\citeauthoryear{Arel, Liu, Urbanik, and Kohls}{Arel
  et~al\mbox{.}}{2010}]%
        {arel2010reinforcement}
\bibfield{author}{\bibinfo{person}{Itamar Arel}, \bibinfo{person}{Cong Liu},
  \bibinfo{person}{T Urbanik}, {and} \bibinfo{person}{AG Kohls}.}
  \bibinfo{year}{2010}\natexlab{}.
\newblock \showarticletitle{Reinforcement learning-based multi-agent system for
  network traffic signal control}.
\newblock \bibinfo{journal}{{\em IET Intelligent Transport Systems\/}}
  \bibinfo{volume}{4}, \bibinfo{number}{2} (\bibinfo{year}{2010}),
  \bibinfo{pages}{128--135}.
\newblock


\bibitem[\protect\citeauthoryear{Baghery, Haadem, {\O}ksendal, and
  Turpin}{Baghery et~al\mbox{.}}{2013}]%
        {baghery2013optimal}
\bibfield{author}{\bibinfo{person}{Fouzia Baghery}, \bibinfo{person}{Sven
  Haadem}, \bibinfo{person}{Bernt {\O}ksendal}, {and} \bibinfo{person}{Isabelle
  Turpin}.} \bibinfo{year}{2013}\natexlab{}.
\newblock \showarticletitle{Optimal stopping and stochastic control
  differential games for jump diffusions}.
\newblock \bibinfo{journal}{{\em Stochastics An International Journal of
  Probability and Stochastic Processes\/}} \bibinfo{volume}{85},
  \bibinfo{number}{1} (\bibinfo{year}{2013}), \bibinfo{pages}{85--97}.
\newblock


\bibitem[\protect\citeauthoryear{Bayraktar, Hu, and Young}{Bayraktar
  et~al\mbox{.}}{2011}]%
        {bayraktar2011minimizing}
\bibfield{author}{\bibinfo{person}{Erhan Bayraktar}, \bibinfo{person}{Xueying
  Hu}, {and} \bibinfo{person}{Virginia~R Young}.}
  \bibinfo{year}{2011}\natexlab{}.
\newblock \showarticletitle{Minimizing the probability of lifetime ruin under
  stochastic volatility}.
\newblock \bibinfo{journal}{{\em Insurance: Mathematics and Economics\/}}
  \bibinfo{volume}{49}, \bibinfo{number}{2} (\bibinfo{year}{2011}),
  \bibinfo{pages}{194--206}.
\newblock


\bibitem[\protect\citeauthoryear{Bertsekas}{Bertsekas}{2008}]%
        {bertsekas2008approximate}
\bibfield{author}{\bibinfo{person}{Dimitri~P Bertsekas}.}
  \bibinfo{year}{2008}\natexlab{}.
\newblock \showarticletitle{Approximate dynamic programming}.
\newblock  (\bibinfo{year}{2008}).
\newblock


\bibitem[\protect\citeauthoryear{Blackwell and Ferguson}{Blackwell and
  Ferguson}{1968}]%
        {blackwell1968big}
\bibfield{author}{\bibinfo{person}{David Blackwell} {and}
  \bibinfo{person}{Tom~S Ferguson}.} \bibinfo{year}{1968}\natexlab{}.
\newblock \showarticletitle{The big match}.
\newblock \bibinfo{journal}{{\em The Annals of Mathematical Statistics\/}}
  \bibinfo{volume}{39}, \bibinfo{number}{1} (\bibinfo{year}{1968}),
  \bibinfo{pages}{159--163}.
\newblock


\bibitem[\protect\citeauthoryear{Carr and Madan}{Carr and Madan}{2005}]%
        {carr2005note}
\bibfield{author}{\bibinfo{person}{Peter Carr} {and} \bibinfo{person}{Dilip~B
  Madan}.} \bibinfo{year}{2005}\natexlab{}.
\newblock \showarticletitle{A note on sufficient conditions for no arbitrage}.
\newblock \bibinfo{journal}{{\em Finance Research Letters\/}}
  \bibinfo{volume}{2}, \bibinfo{number}{3} (\bibinfo{year}{2005}),
  \bibinfo{pages}{125--130}.
\newblock


\bibitem[\protect\citeauthoryear{Chancelier, {\O}ksendal, and Sulem}{Chancelier
  et~al\mbox{.}}{2002}]%
        {chancelier2002combined}
\bibfield{author}{\bibinfo{person}{Jean-Philippe Chancelier},
  \bibinfo{person}{Bernt {\O}ksendal}, {and} \bibinfo{person}{Agn{\`e}s
  Sulem}.} \bibinfo{year}{2002}\natexlab{}.
\newblock \showarticletitle{Combined stochastic control and optimal stopping,
  and application to numerical approximation of combined stochastic and impulse
  control}.
\newblock \bibinfo{journal}{{\em Труды Математического
  института имени ВА Стеклова\/}}
  \bibinfo{volume}{237}, \bibinfo{number}{0} (\bibinfo{year}{2002}),
  \bibinfo{pages}{149--172}.
\newblock


\bibitem[\protect\citeauthoryear{Deisenroth, Neumann, Peters,
  et~al\mbox{.}}{Deisenroth et~al\mbox{.}}{2013}]%
        {deisenroth2013survey}
\bibfield{author}{\bibinfo{person}{Marc~Peter Deisenroth},
  \bibinfo{person}{Gerhard Neumann}, \bibinfo{person}{Jan Peters},
  {et~al\mbox{.}}} \bibinfo{year}{2013}\natexlab{}.
\newblock \showarticletitle{A survey on policy search for robotics}.
\newblock \bibinfo{journal}{{\em Foundations and Trends{\textregistered} in
  Robotics\/}} \bibinfo{volume}{2}, \bibinfo{number}{1--2}
  (\bibinfo{year}{2013}), \bibinfo{pages}{1--142}.
\newblock


\bibitem[\protect\citeauthoryear{Dynkin}{Dynkin}{1967}]%
        {dynkin1967game}
\bibfield{author}{\bibinfo{person}{EB Dynkin}.}
  \bibinfo{year}{1967}\natexlab{}.
\newblock \showarticletitle{Game variant of a problem on optimal stopping}. In
  \bibinfo{booktitle}{{\em Soviet Math. Dokl.}}, Vol.~\bibinfo{volume}{10}.
  \bibinfo{pages}{270--274}.
\newblock


\bibitem[\protect\citeauthoryear{Garcia and Fern{\'a}ndez}{Garcia and
  Fern{\'a}ndez}{2012}]%
        {garcia2012safe}
\bibfield{author}{\bibinfo{person}{Javier Garcia} {and}
  \bibinfo{person}{Fernando Fern{\'a}ndez}.} \bibinfo{year}{2012}\natexlab{}.
\newblock \showarticletitle{Safe exploration of state and action spaces in
  reinforcement learning}.
\newblock \bibinfo{journal}{{\em Journal of Artificial Intelligence
  Research\/}}  \bibinfo{volume}{45} (\bibinfo{year}{2012}),
  \bibinfo{pages}{515--564}.
\newblock


\bibitem[\protect\citeauthoryear{Garc{\i}a and Fern{\'a}ndez}{Garc{\i}a and
  Fern{\'a}ndez}{2015}]%
        {garcia2015comprehensive}
\bibfield{author}{\bibinfo{person}{Javier Garc{\i}a} {and}
  \bibinfo{person}{Fernando Fern{\'a}ndez}.} \bibinfo{year}{2015}\natexlab{}.
\newblock \showarticletitle{A comprehensive survey on safe reinforcement
  learning}.
\newblock \bibinfo{journal}{{\em Journal of Machine Learning Research\/}}
  \bibinfo{volume}{16}, \bibinfo{number}{1} (\bibinfo{year}{2015}),
  \bibinfo{pages}{1437--1480}.
\newblock


\bibitem[\protect\citeauthoryear{Gottesman, Johansson, Komorowski, Faisal,
  Sontag, Doshi-Velez, and Celi}{Gottesman et~al\mbox{.}}{2019}]%
        {gottesman2019guidelines}
\bibfield{author}{\bibinfo{person}{Omer Gottesman}, \bibinfo{person}{Fredrik
  Johansson}, \bibinfo{person}{Matthieu Komorowski}, \bibinfo{person}{Aldo
  Faisal}, \bibinfo{person}{David Sontag}, \bibinfo{person}{Finale
  Doshi-Velez}, {and} \bibinfo{person}{Leo~Anthony Celi}.}
  \bibinfo{year}{2019}\natexlab{}.
\newblock \showarticletitle{Guidelines for reinforcement learning in
  healthcare.}
\newblock \bibinfo{journal}{{\em Nature medicine\/}} \bibinfo{volume}{25},
  \bibinfo{number}{1} (\bibinfo{year}{2019}), \bibinfo{pages}{16--18}.
\newblock


\bibitem[\protect\citeauthoryear{Guo, Chen, and Huang}{Guo
  et~al\mbox{.}}{2011}]%
        {guo2011markets}
\bibfield{author}{\bibinfo{person}{Feng Guo}, \bibinfo{person}{Carl~R Chen},
  {and} \bibinfo{person}{Ying~Sophie Huang}.} \bibinfo{year}{2011}\natexlab{}.
\newblock \showarticletitle{Markets contagion during financial crisis: A
  regime-switching approach}.
\newblock \bibinfo{journal}{{\em International Review of Economics \&
  Finance\/}} \bibinfo{volume}{20}, \bibinfo{number}{1} (\bibinfo{year}{2011}),
  \bibinfo{pages}{95--109}.
\newblock


\bibitem[\protect\citeauthoryear{Hill}{Hill}{1979}]%
        {hill1979existence}
\bibfield{author}{\bibinfo{person}{Theodore~Preston Hill}.}
  \bibinfo{year}{1979}\natexlab{}.
\newblock \showarticletitle{On the existence of good Markov strategies}.
\newblock \bibinfo{journal}{{\it Trans. Amer. Math. Soc.}}
  \bibinfo{volume}{247} (\bibinfo{year}{1979}), \bibinfo{pages}{157--176}.
\newblock


\bibitem[\protect\citeauthoryear{Jennison and Turnbull}{Jennison and
  Turnbull}{2013}]%
        {jennison2013interim}
\bibfield{author}{\bibinfo{person}{Christopher Jennison} {and}
  \bibinfo{person}{Bruce~W Turnbull}.} \bibinfo{year}{2013}\natexlab{}.
\newblock \showarticletitle{Interim monitoring of clinical trials: Decision
  theory, dynamic programming and optimal stopping}.
\newblock \bibinfo{journal}{{\em Kuwait Journal of Science\/}}
  \bibinfo{volume}{40}, \bibinfo{number}{2} (\bibinfo{year}{2013}).
\newblock


\bibitem[\protect\citeauthoryear{Jiao and Pham}{Jiao and Pham}{2011}]%
        {jiao2011optimal}
\bibfield{author}{\bibinfo{person}{Ying Jiao} {and} \bibinfo{person}{Huy{\^e}n
  Pham}.} \bibinfo{year}{2011}\natexlab{}.
\newblock \showarticletitle{Optimal investment with counterparty risk: a
  default-density model approach}.
\newblock \bibinfo{journal}{{\em Finance and Stochastics\/}}
  \bibinfo{volume}{15}, \bibinfo{number}{4} (\bibinfo{year}{2011}),
  \bibinfo{pages}{725--753}.
\newblock


\bibitem[\protect\citeauthoryear{Karatzas and Sudderth}{Karatzas and
  Sudderth}{2006}]%
        {karatzas2006stochastic}
\bibfield{author}{\bibinfo{person}{Ioannis Karatzas} {and}
  \bibinfo{person}{William Sudderth}.} \bibinfo{year}{2006}\natexlab{}.
\newblock \showarticletitle{Stochastic games of control and stopping for a
  linear diffusion}.
\newblock In \bibinfo{booktitle}{{\em Random Walk, Sequential Analysis And
  Related Topics: A Festschrift in Honor of Yuan-Shih Chow}}.
  \bibinfo{publisher}{World Scientific}, \bibinfo{pages}{100--117}.
\newblock


\bibitem[\protect\citeauthoryear{Kruse and Strack}{Kruse and Strack}{2015}]%
        {kruse2015optimal}
\bibfield{author}{\bibinfo{person}{Thomas Kruse} {and} \bibinfo{person}{Philipp
  Strack}.} \bibinfo{year}{2015}\natexlab{}.
\newblock \showarticletitle{Optimal stopping with private information}.
\newblock \bibinfo{journal}{{\em Journal of Economic Theory\/}}
  \bibinfo{volume}{159} (\bibinfo{year}{2015}), \bibinfo{pages}{702--727}.
\newblock


\bibitem[\protect\citeauthoryear{Mguni}{Mguni}{2018}]%
        {mguni2018viscosity}
\bibfield{author}{\bibinfo{person}{David Mguni}.}
  \bibinfo{year}{2018}\natexlab{}.
\newblock \showarticletitle{A Viscosity Approach to Stochastic Differential
  Games of Control and Stopping Involving Impulsive Control}.
\newblock \bibinfo{journal}{{\em arXiv preprint arXiv:1803.11432\/}}
  (\bibinfo{year}{2018}).
\newblock


\bibitem[\protect\citeauthoryear{Morimoto and Doya}{Morimoto and Doya}{2001}]%
        {morimoto2001robust}
\bibfield{author}{\bibinfo{person}{Jun Morimoto} {and} \bibinfo{person}{Kenji
  Doya}.} \bibinfo{year}{2001}\natexlab{}.
\newblock \showarticletitle{Robust reinforcement learning}. In
  \bibinfo{booktitle}{{\em Advances in Neural Information Processing Systems}}.
  \bibinfo{pages}{1061--1067}.
\newblock


\bibitem[\protect\citeauthoryear{Nisan, Roughgarden, Tardos, and
  Vazirani}{Nisan et~al\mbox{.}}{2007}]%
        {nisan2007algorithmic}
\bibfield{author}{\bibinfo{person}{Noam Nisan}, \bibinfo{person}{Tim
  Roughgarden}, \bibinfo{person}{Eva Tardos}, {and} \bibinfo{person}{Vijay~V
  Vazirani}.} \bibinfo{year}{2007}\natexlab{}.
\newblock \bibinfo{booktitle}{{\em Algorithmic game theory}}.
\newblock \bibinfo{publisher}{Cambridge University Press}.
\newblock


\bibitem[\protect\citeauthoryear{Peskir and Shiryaev}{Peskir and
  Shiryaev}{2006}]%
        {peskir2006optimal}
\bibfield{author}{\bibinfo{person}{Goran Peskir} {and} \bibinfo{person}{Albert
  Shiryaev}.} \bibinfo{year}{2006}\natexlab{}.
\newblock \bibinfo{booktitle}{{\em Optimal stopping and free-boundary
  problems}}.
\newblock \bibinfo{publisher}{Springer}.
\newblock


\bibitem[\protect\citeauthoryear{Pham}{Pham}{1997}]%
        {pham1997optimal}
\bibfield{author}{\bibinfo{person}{Huy{\^e}n Pham}.}
  \bibinfo{year}{1997}\natexlab{}.
\newblock \showarticletitle{Optimal stopping, free boundary, and American
  option in a jump-diffusion model}.
\newblock \bibinfo{journal}{{\em Applied Mathematics and Optimization\/}}
  \bibinfo{volume}{35}, \bibinfo{number}{2} (\bibinfo{year}{1997}),
  \bibinfo{pages}{145--164}.
\newblock


\bibitem[\protect\citeauthoryear{Shalev-Shwartz, Shammah, and
  Shashua}{Shalev-Shwartz et~al\mbox{.}}{2016}]%
        {shalev2016safe}
\bibfield{author}{\bibinfo{person}{Shai Shalev-Shwartz},
  \bibinfo{person}{Shaked Shammah}, {and} \bibinfo{person}{Amnon Shashua}.}
  \bibinfo{year}{2016}\natexlab{}.
\newblock \showarticletitle{Safe, multi-agent, reinforcement learning for
  autonomous driving}.
\newblock \bibinfo{journal}{{\em arXiv preprint arXiv:1610.03295\/}}
  (\bibinfo{year}{2016}).
\newblock


\bibitem[\protect\citeauthoryear{Shapley}{Shapley}{1953}]%
        {shapley1953stochastic}
\bibfield{author}{\bibinfo{person}{Lloyd~S Shapley}.}
  \bibinfo{year}{1953}\natexlab{}.
\newblock \showarticletitle{Stochastic games}.
\newblock \bibinfo{journal}{{\em Proceedings of the national academy of
  sciences\/}} \bibinfo{volume}{39}, \bibinfo{number}{10}
  (\bibinfo{year}{1953}), \bibinfo{pages}{1095--1100}.
\newblock


\bibitem[\protect\citeauthoryear{Tamar, Chow, Ghavamzadeh, and Mannor}{Tamar
  et~al\mbox{.}}{2015}]%
        {tamar2015policy}
\bibfield{author}{\bibinfo{person}{Aviv Tamar}, \bibinfo{person}{Yinlam Chow},
  \bibinfo{person}{Mohammad Ghavamzadeh}, {and} \bibinfo{person}{Shie Mannor}.}
  \bibinfo{year}{2015}\natexlab{}.
\newblock \showarticletitle{Policy gradient for coherent risk measures}. In
  \bibinfo{booktitle}{{\em Advances in Neural Information Processing Systems}}.
  \bibinfo{pages}{1468--1476}.
\newblock


\bibitem[\protect\citeauthoryear{Tsitsiklis and Van~Roy}{Tsitsiklis and
  Van~Roy}{1999}]%
        {tsitsiklis1999optimal}
\bibfield{author}{\bibinfo{person}{John~N Tsitsiklis} {and}
  \bibinfo{person}{Benjamin Van~Roy}.} \bibinfo{year}{1999}\natexlab{}.
\newblock \showarticletitle{Optimal stopping of Markov processes: Hilbert space
  theory, approximation algorithms, and an application to pricing
  high-dimensional financial derivatives}.
\newblock \bibinfo{journal}{{\it IEEE Trans. Automat. Control}}
  \bibinfo{volume}{44}, \bibinfo{number}{10} (\bibinfo{year}{1999}),
  \bibinfo{pages}{1840--1851}.
\newblock


\bibitem[\protect\citeauthoryear{Yiu}{Yiu}{2004}]%
        {yiu2004optimal}
\bibfield{author}{\bibinfo{person}{Ka-Fai~Cedric Yiu}.}
  \bibinfo{year}{2004}\natexlab{}.
\newblock \showarticletitle{Optimal portfolios under a value-at-risk
  constraint}.
\newblock \bibinfo{journal}{{\em Journal of Economic Dynamics and Control\/}}
  \bibinfo{volume}{28}, \bibinfo{number}{7} (\bibinfo{year}{2004}),
  \bibinfo{pages}{1317--1334}.
\newblock


\bibitem[\protect\citeauthoryear{Young}{Young}{2004}]%
        {young2004optimal}
\bibfield{author}{\bibinfo{person}{Virginia~R Young}.}
  \bibinfo{year}{2004}\natexlab{}.
\newblock \showarticletitle{Optimal investment strategy to minimize the
  probability of lifetime ruin}.
\newblock \bibinfo{journal}{{\em North American Actuarial Journal\/}}
  \bibinfo{volume}{8}, \bibinfo{number}{4} (\bibinfo{year}{2004}),
  \bibinfo{pages}{106--126}.
\newblock


\bibitem[\protect\citeauthoryear{Zhao and Chen}{Zhao and Chen}{2009}]%
        {zhao2009enhancing}
\bibfield{author}{\bibinfo{person}{Guozhen Zhao} {and} \bibinfo{person}{Wen
  Chen}.} \bibinfo{year}{2009}\natexlab{}.
\newblock \showarticletitle{Enhancing R\&D in science-based industry: An
  optimal stopping model for drug discovery}.
\newblock \bibinfo{journal}{{\em International Journal of Project
  Management\/}} \bibinfo{volume}{27}, \bibinfo{number}{8}
  (\bibinfo{year}{2009}), \bibinfo{pages}{754--764}.
\newblock


\end{thebibliography}

\clearpage{\Large\section*{Appendix}}
\subsection*{Assumptions}
Our results are built under the following assumptions:\\

\noindent\underline{Assumption A.1.} Stationarity: the expectations $\mathbb{E}$ are taken w.r.t. a stationary distribution so that for any measurable function $f$ we have $\mathbb{E}\left[f(s)\right]=\mathbb{E}\left[f(s_k)\right]$ for any $k\geq 0$ where $s:=s_0$. \\

\noindent\underline{Assumption A.2.} Ergodicity:  i) Any invariant random variable of the state process is $P-$almost surely ($P-$a.s.) a constant.\\

\noindent\underline{Assumption A.3.} Markovian transition dynamics: the transition probability function $P$ satisfies the following equality: $P(s_{k+1}\in A|\mathcal{F}_k)=P(s_{k+1},A)$ for any $A\in\mathcal{B}(\mathbb{R}^p)$.\\

\noindent\underline{Assumption A.4.} The constituent functions $\{R,G\}$ in $J$ are square integrable: that is, $R,G\in \mathbb{L}_2(\mu)$.
\subsection*{Additional Lemmata}
We begin the analysis with some preliminary lemmata and definitions which are useful for proving the main results.

\begin{customdefinition}{A.1}
An operator $T:\mathcal{V}\to\mathcal{V}$ is said to be a \textbf{contraction} w.r.t a norm $\|\cdot\|$ if there exists a constant $c\in[0,1[$ s.th for any $V_1,V_2\in \mathcal{V}$ we have that:
\begin{align}
    \|TV_1-TV_2\|\leq c\|V_1-V_2\|.
\end{align}
\end{customdefinition}

\begin{customdefinition}{A.2}
An operator $T:\mathcal{V}\to \mathcal{V}$ is \textbf{non-expansive} if $\forall V_1,V_2\in \mathcal{V}$ we have:
\begin{align}
    \|TV_1-TV_2\|\leq \|V_1-V_2\|.
\end{align}
\end{customdefinition}
\begin{customdefinition}{A.3}
The \textbf{residual} of a vector $V\in\mathcal{V}$ w.r.t the operator $T:\mathcal{V}\to \mathcal{V}$ is:
\begin{align}
    \epsilon_T(V):= \|TV-V\|.
\end{align}
\end{customdefinition}
\begin{customlemma}{A.1}\label{max_l.val_inequality}
Define ${\rm val}^+[f]:=\min_{b\in\mathbb{B}}\max_{a\in\mathbb{A}}f(a,b)$ and define\\ ${\rm val}^-[f]:=\max_{a\in\mathbb{A}}\min_{b\in\mathbb{B}}f(a,b)$, then for any $b\in\mathbb{B}$ we have that for any $f,g\in \mathbb{L}$ and for any $c\in\mathbb{R}_{>0}$:
\begin{align}\nonumber
\left|\max_{a\in\mathbb{A}}f(a,b)-\max_{a\in\mathbb{A}}g(a,b)\right|&\leq c
\implies
\left|{\rm val}^-[f]-{\rm val}^-[g]\right|\leq c.
\end{align}
\end{customlemma}
\begin{customlemma}{A.2}\label{max_min_inequality_2} For any $f,g,h\in \mathbb{L}$ and for any $c\in\mathbb{R}_{>0}$ we have that:
\begin{align}\nonumber
 \left\| f- g\right\|\leq c
\implies
\left\|\min\{ f,h\}-\min\{ g,h\}\right\|\leq c.
\end{align}
\end{customlemma}
\begin{customlemma}{A.3}\label{max_triple_inequality} Let the functions $f,g,h\in \mathbb{L}$ then
\begin{align}
\left\|\max \{ f,h\}-\max\{g,h\}\right\|&\leq \|f-g\|.
\end{align}
\end{customlemma}
The following lemma, whose proof is deferred is a required result for proving the contraction mapping property of the operator $T$.
\begin{customlemma}{A.4}\label{non_expansive_P}
The probability transition kernel $P$ is non-expansive, that is:
\begin{align}
    \|PV_1-PV_2\|\leq \|V_1-V_2\|.
\end{align}
\end{customlemma} 
The following estimates provide bounds on the value $J^\star$ which we use later in the development of the iterative algorithm. We defer the proof of the results to the appendix.
\begin{customlemma}{A.5}\label{residual_lemma}
Let $T:\mathcal{V}\to\mathcal{V}$ be a contraction mapping in $\|\cdot\|$ and let $J^\star$ be a fixed point so that $TJ^\star=J^\star$ then there exists a constant $c\in [0,1[$ s.th:
\begin{align}
    \|J^\star-J\|\leq (1-c)^{-1}\epsilon_{T}(J).
\end{align}
\end{customlemma}
\begin{customlemma}{A.6}\label{residual_2}
Let $T_1:\mathcal{V}\to\mathcal{V},T_2:\mathcal{V}\to\mathcal{V}$ be contraction mappings and suppose there exists vectors $J^\star_1,J^\star_2$ s.th  $T_1J^\star_1=J^\star_1$ and $T_2J^\star_2=J^\star_2$ (i.e. $J^\star_1,J^\star_2$ are fixed points w.r.t $T_1$ and $T_2$ respectively) then  $\exists c_1,c_2\in [0,1[$ s.th:
\begin{align*}
    \left\|J^\star_1-J^\star_2\right\|\leq \left(1-\{c_1\wedge c_2\}\right)^{-1}\left(\epsilon_{T_1}(J)-\epsilon_{T_2}(J)\right).
\end{align*}
\end{customlemma}
\begin{customlemma}{A.7}\label{T_operator_properties}
The operator $T$ satisfies the following:
\begin{enumerate}
\item(Monotonicity) For any $J_1,J_2\in \mathbb{L}_2$ s.th. $J_1(s)\leq J_2(s)$ then $TJ_1\leq TJ_2$.
\item(Constant shift) Let $I(s)\equiv \textbf{1}$ be the unit function, then for any $J\in \mathbb{L}_2$ and for any scalar $\alpha\in \mathbb{R}$, $T$ satisfies $T(J+\alpha I)(s)=TJ(s)+\alpha I(s)$.
\end{enumerate}
\end{customlemma}
\subsection*{Proof of Results}
\begin{proof}[Proof of Lemma \ref{max_l.val_inequality}]
We begin by noting the following inequality for any
$f:\mathcal{V}\times\mathcal{V}\to\mathbb{R},g:\mathcal{V}\times\mathcal{V}\to\mathbb{R}$ s.th. $f,g\in \mathbb{L}$ we have that for all $b\in\mathcal{V}$:
\begin{align}
\left|\underset{a\in\mathcal{V}}{\max}\:f(a,b)-\underset{a\in\mathcal{V}}{\max}\: g(a,b)\right| \leq \underset{a\in\mathcal{V}}{\max}\: \left|f(a,b)-g(a,b)\right|.    \label{lemma_1_basic_max_ineq}
\end{align}
From (\ref{lemma_1_basic_max_ineq}) we can straightforwardly derive the fact that for any $b\in\mathcal{V}$:
\begin{align}
\left|\underset{a\in\mathcal{V}}{\min}\: f(a,b)-\underset{a\in\mathcal{V}}{\min}\: g(a,b)\right| \leq \underset{a\in\mathcal{V}}{\max}\: \left|f(a,b)-g(a,b)\right|,   \label{lemma_1_max_ineq_min_version} 
\end{align}
(this can be seen by negating each of the functions in (\ref{lemma_1_basic_max_ineq}) and using the properties of the $\max$ operator).

Assume that for any $b\in\mathcal{V}$ the following inequality holds: 
\begin{align}
\underset{a\in\mathcal{V}}{\max}\: \left|f(a,b)-g(a,b)\right|  \leq c \label{lemma_1_assumption} 
\end{align}
Since (\ref{lemma_1_max_ineq_min_version}) holds for any $b\in\mathcal{V}$ and, by (\ref{lemma_1_basic_max_ineq}), we have in particular that
\begin{align}
&\nonumber\left|\underset{b\in\mathcal{V}}{\max}\;\underset{a\in\mathcal{V}}{\min}\: f(a,b)-\underset{b\in\mathcal{V}}{\max}\;\underset{a\in\mathcal{V}}{\min}\: g(a,b)\right| 
\\&\nonumber\leq
\underset{b\in\mathcal{V}}{\max}\left|\underset{a\in\mathcal{V}}{\min}\: f(a,b)-\underset{a\in\mathcal{V}}{\min}\: g(a,b)\right| 
\\&\leq \underset{b\in\mathcal{V}}{\max}\;\underset{a\in\mathcal{V}}{\max}\: \left|f(a,b)-g(a,b)\right|  \leq c,\label{lemma_1_max_ineq_fin} 
\end{align}
whenever (\ref{lemma_1_assumption}) holds which gives the required result.
\end{proof}
Lemma \ref{max_min_inequality_2} and Lemma \ref{max_triple_inequality} are given without proof but can be straightforwardly checked.
\begin{proof}[Proof of Lemma \ref{non_expansive_P}]
The proof is standard, we give the details for the sake of completion. Indeed, using the Tonelli-Fubini theorem and the iterated law of expectations, we have that:
\begin{align*}
&\|PJ\|^2=\mathbb{E}\left[(PJ)^2[s_0]\right]\\&=\mathbb{E}\left(\left[\mathbb{E}\left[J[s_1]|s_0\right]\right)^2\right]
\leq \mathbb{E}\left[\mathbb{E}\left[J^2[s_1]|s_0\right]\right] 
= \mathbb{E}\left[J^2[s_1]\right]=\|J\|^2,
\end{align*}
where we have used Jensen's inequality to generate the inequality. This completes the proof.
\end{proof}

\begin{proof}[Proof of Lemma \ref{residual_lemma}]
The proof follows almost immediately from the triangle inequality, indeed for any $J\in \mathbb{L}_2$:
\begin{align}
\|J^\star-J\|=\|TJ^\star-J\|
\leq \gamma\|J^\star-J\|+\|TJ-J\|,
\end{align}
where we have added and subtracted $TJ$ to produce the inequality. The result then follows after inserting the definition of $\epsilon_T(J)$. 
\end{proof}
\begin{proof}[Proof of Lemma \ref{residual_2}]
The proof follows directly from Lemma \ref{residual_lemma}.
Indeed, we observe that for any $J\in \mathbb{L}_2$ we have
\begin{align}
\|J^\star_1-J^\star_2\|
\leq \|J^\star_1-J\|+\|J^\star_2-J\|,
\end{align}
where we have added and subtracted $J$ to produce the inequality. The result then follows from Lemma \ref{residual_lemma}.
\end{proof}
\begin{proof}[Proof of Lemma \ref{T_operator_properties}]
Part 2 immediately follows from the properties of the $\max$ and $\min$ operators. 
It remains only to prove part 1.\\
We seek to prove that for any $s\in\mathcal{S}$, if $J\leq \bar{J}$ then
\begin{align}
\begin{aligned}\min_{\tau\in\mathcal{T}}&\left\{\max_{a\in A} R^a_s+\gamma\sum_{s'\in\mathcal{S}}P^a_{ss'}J^{\tau,\pi}[s'],G(S_\tau)\right\}
\\&\qquad\qquad\qquad-\min_{\tau\in\mathcal{T}}\left\{\max_{a\in A} R^a_s+\gamma\sum_{s'\in\mathcal{S}}P^a_{ss'}\bar{J}^\pi[s'],G(S_\tau)\right\}\leq 0
\end{aligned}
\end{align}
We begin by firstly making the following observations:\\
1. For any $x,y,h\in \mathcal{V}$ 
\begin{align}
x\leq y \implies \min\{x,h\}\leq \min\{y,h\}. \label{basic_min_result}
\end{align}
2. For any $f,g,h\in \mathbb{L}_2$
\begin{align}
\left|\underset{x\in\mathcal{V}}{\max}\;f(x)-\underset{x\in\mathcal{V}}{\max}\;g(x)\right|\leq
\underset{x\in\mathcal{V}}{\max}\left|f(x)-g(x)\right|.
\label{basic_max_result}
\end{align}
Assume that $J\leq \bar{J}$, then we observe that:
\begin{align}
&\begin{aligned}\underset{a\in\mathcal{A}}{\max}&\left\{R^a_s+\gamma\sum_{s'\in\mathcal{S}}P^a_{ss'}J^{\tau,\pi}[s']\right\}
-\underset{a\in\mathcal{A}}{\max}\left\{R^a_s+\gamma\sum_{s'\in\mathcal{S}}P^a_{ss'}\bar{J}^\pi[s']\right\}
\end{aligned}
\\&\nonumber\leq \gamma\max_{a\in A}\left\{\sum_{s'\in\mathcal{S}}P^a_{ss'}\left(J^{\tau,\pi}[s']-\bar{J}^\pi[s']\right)\right\}
\\&=\gamma\left(\left(PJ\right)-\left(P\bar{J}\right)\right)\nonumber
\leq J-\bar{J}\leq 0,
\end{align}
where we have used (\ref{basic_max_result}) in the penultimate line. The result immediately follows after applying (\ref{basic_min_result}).
\end{proof}


The proofs of the results in Sec. \ref{approx_scheme} are constructed in a similar fashion that in (Bertsekas, 2008) (approximate dynamic programming). However, the analysis incorporates some important departures due to the need to accommodate the actions of two players that operate antagonistically.

We now prove the first of the two results of Sec. \ref{approx_scheme}.
\begin{proof}[Proof of Theorem \ref{theorem_approx_convergence}]
We firstly notice the construction of $\hat{\tau}$ given by
\begin{align}
\hat{\tau}=\min\{t|G(s_t)\leq Q^\star\},
\end{align}
is sensible since we observe that 
\begin{align}\nonumber
&\min\{t|G(s_t)\leq J^\star\}
\\\nonumber=&\min\{t|G(s_t)\leq \min\{G(s_t), Q^\star(s_t)\}
\\\nonumber=&\min\{t|G(s_t)\leq Q^\star\}.
\end{align}
\noindent\underline{Result 1}
\\\noindent\underline{Step 1}
Our first step is to prove the following bound:
\begin{align}
\left\|FQ-F\bar{Q}\right\|\leq\gamma \left\| Q- \bar{Q}\right\|.
\end{align}
\begin{proof}
\begin{align}\nonumber
&\left\|\underset{a\in\mathcal{A}}{\max}\;R^a_s+\gamma P\min\{G,Q\}-\left(\underset{a\in\mathcal{A}}{\max}\;R^a_s+\gamma P\min\{G,\bar{Q}\}\right)\right\|
\\&\nonumber=\gamma\left\| P\min\{G,Q\}- P\min\{G,\bar{Q}\}\right\|
\\&\nonumber\leq\gamma \left\| \min\left\{G,Q\right\}- \min\left\{G,\bar{Q}\right\}\right\|
\\&\nonumber\leq\gamma \left\| Q- \bar{Q}\right\|.
\end{align}
which is the required result.
\end{proof}
\noindent\underline{Step 2}\\
Our next task is to prove that the quantity $Q^\star$ is a fixed point of $F$ and hence we can apply the operator $F$ to achieve the approximation of the value.
\begin{proof}
Using the definition of $T$ (c.f. (13) we find that:
\begin{align*}
&\hspace{-15 mm}J^\star=TJ^\star
\iff
\underset{a\in\mathcal{A}}{\max}\;R^a_s+\gamma PJ^\star
\\&\hspace{-16 mm}=\underset{a\in\mathcal{A}}{\max}\;R^a_s+\gamma P\min\left\{\underset{a\in\mathcal{A}}{\max}\;R^a_s+\gamma PJ,G\right\}
\\&\hspace{-18 mm}
\iff
\\&\hspace{-16 mm}Q^\star=\underset{a\in\mathcal{A}}{\max}\;R^a_s+\gamma P\min\left\{Q^\star,G\right\}
\\&\hspace{-18 mm}
\iff
\\&Q^\star= F Q^\star.
\end{align*}
\end{proof}
\noindent\underline{Step 3}\\
We now prove that the operator $\Pi F$ is a contraction on $Q$, that is the following inequality holds:
\begin{align*}
\left\|\Pi F Q -\Pi F\bar{Q}\right\|\leq \gamma \left\|Q-\bar{Q}\right\|.    
\end{align*}
\begin{proof}
The proof follows straightforwardly by the properties of a projection mapping:  
\begin{align*}
\left\|\Pi F Q - \Pi F\bar{Q}\right\|\leq  \left\|FQ-F\bar{Q}\right\|\leq  \gamma\left\|Q-\bar{Q}\right\|.   \end{align*}
\end{proof}
\noindent\underline{Step 4}\\
\begin{align}
\left\|\Phi r^\star- Q^\star\right\|
\leq\frac{1}{\sqrt{1-\gamma^2}}\left\|\Pi Q^\star- Q^\star\right\|.\label{step_4_result}
\end{align}
The result is proven using the orthogonality of the (orthogonal) projection and by the Pythagorean theorem. Indeed, we have that: 
\begin{proof}
\begin{align*}
&\left\|\Phi r^\star- Q^\star\right\|^2=  
\left\|\Phi r^\star- \Pi Q^\star\|^2+\|\Pi Q^\star- Q^\star\right\|^2
\\&=  
\left\|\Pi F\Phi r^\star- \Pi Q^\star\right\|^2+\left\|\Pi Q^\star- Q^\star\right\|^2
\\&=  
\left\|\Pi F\Phi r^\star- \Pi Q^\star\right\|^2+\left\|\Pi Q^\star- Q^\star\right\|^2
\\&\leq\gamma^2\left\|\Phi r^\star-  Q^\star\right\|^2+\left\|\Pi Q^\star- Q^\star\right\|^2.
\end{align*}
Hence, we find that
\begin{align*}
\left\|\Phi r^\star- Q^\star\right\|
\leq\frac{1}{\sqrt{1-\gamma^2}}\left\|\Pi Q^\star- Q^\star\right\|,
\end{align*}
which is the required result.
\end{proof}
\noindent\underline{Result 2}\\
\begin{align}
\mathbb{E}\left[J^\star[s]\right]-\mathbb{E}\left[J^{\tilde{\tau},\tilde{\pi}}[s]\right]\leq \frac{2}{[(1-\gamma)\sqrt{1-\gamma^2}]}\|\Pi Q^\star -Q^\star\|.
\end{align}
\begin{proof}
The proof by Jensen's inequality, stationarity and the non-expansive property of $P$. In particular, we have
\begin{align}\nonumber
&\mathbb{E}\left[J^\star[s]\right]-\mathbb{E}\left[J^{\tilde{\tau},\tilde{\pi}}[s]\right]
\\&\nonumber=\mathbb{E}\left[PJ^\star[s]\right]-\mathbb{E}\left[PJ^{\tilde{\tau},\tilde{\pi}}[s]\right]
\\&\nonumber\leq\left|\mathbb{E}\left[PJ^\star[s]\right]-\mathbb{E}\left[PJ^{\tilde{\tau},\tilde{\pi}}[s]\right]\right|
\\&\leq\|PJ-PJ^{\tilde{\tau},\tilde{\pi}}\|.\label{J_bound_supp_end}
\end{align}
Inserting the definitions of $Q^\star$ and $\tilde{Q}$ into 
(\ref{J_bound_supp_end}) then gives:
\begin{align}
    \mathbb{E}\left[J^\star[s]\right]-\mathbb{E}\left[J^{\tilde{\tau},\tilde{\pi}}[s]\right]\leq \frac{1}{\gamma}\|Q^\star-\tilde{Q}\|
.\end{align}
It remains therefore to place a bound on the term $\|Q^\star-\tilde{Q}$. We observe that by the triangle inequality and the fixed point properties of $F$ on $Q$ and $\tilde{F}$ on $\tilde{Q}$ we have
\begin{align}
\|Q^\star-\tilde{Q}\|&\leq \|Q^\star-F(\Phi r^\star)\|+\|\tilde{Q}-F(\Phi r^\star)\|   
\\&\leq\gamma\left\{ \|Q^\star-\Phi r^\star\|+\|\tilde{Q}-\Phi r^\star\|\right\}
\\&\leq\gamma\left\{ 2\|Q^\star-\Phi r^\star\|+\|Q^\star-\tilde{Q}\|\right\}.
\end{align}
So that
\begin{align}
\|Q^\star-\tilde{Q}\|&\leq \frac{2\gamma}{1-\gamma}\|Q^\star-\Phi r^\star\|.
\end{align}
The result then follows after substituting the result of step 4 (\ref{step_4_result}).
\end{proof}
Let us now define the following quantity:
\begin{align}
HQ(s):=\begin{cases}
G(s)&\mbox{if } G(s)\leq (\Phi r^\star)(s)
\\Q(s)& {\rm otherwise},
\end{cases}
\end{align}
and
\begin{align}
\tilde{F}Q:=\underset{a\in\mathcal{A}}{\max}\;R^a_s+\gamma PHQ.    
\end{align}
\noindent\underline{Step 5}
\begin{align}
\left\|\tilde{F}Q-\tilde{F}\bar{Q}\right\|\leq \gamma \left\|Q-\bar{Q}\right\|  
\end{align}
\begin{proof}
\begin{align*}
&\left\|\tilde{F}Q-\tilde{F}\bar{Q}\right\|=\left\|\underset{a\in\mathcal{A}}{\max}\;R^a_s+\gamma PHQ-\left(\underset{a\in\mathcal{A}}{\max}\;R^a_s+\gamma PH\bar{Q}\right)\right\|
\\&=\gamma\left\| PHQ- PH\bar{Q}\right\|
\\&\leq\gamma\left\|HQ- H\bar{Q}\right\|
\\&=\gamma\left\|\min\{G,Q\}- \min\{G,\bar{Q}\}\right\|
\\&\leq\gamma\left\|Q- \bar{Q}\right\|.
\end{align*}
We now prove that $    \tilde{Q}=\underset{a\in\mathcal{A}}{\max}\;R^a_s+\gamma PJ^{\pi,\tilde{\tau}}$ is a fixed point.
\begin{align*}
&H\tilde{Q}=H\left(\underset{a\in\mathcal{A}}{\max}R^a_s+\gamma PJ^{\pi,\tilde{\tau}}\right)
\\&=\begin{cases}
G(s)&\mbox{if } G(s)\leq (\Phi r^\star)(s)
\\\underset{a\in\mathcal{A}}{\max}R^a_s+\gamma PJ^{\pi,\tilde{\tau}}& {\rm otherwise}
\end{cases}
\\&=J^{\pi,\tilde{\tau}}
\end{align*}
\end{proof}
Let us now define the following quantity:
\begin{align*}
s(z,r):=\phi(s)\left(\underset{a\in\mathcal{A}}{\max}\;R^a_s+\gamma\min\left\{(\Phi r)(y),G(y)\right\}-(\Phi r)(s)\right).    \end{align*}
Additionally, we define $\bar{s}$ by the following: 
\begin{align*}
\bar{s}(z,r):=\mathbb{E}\left[s(z_0,r)\right].    
\end{align*}
The components of $s(z,r)$ are then given by: 
\begin{align*}
    s_k\equiv \mathbb{E}\left[\phi_k(s_0)\left(\underset{a\in\mathcal{A}}{\max}\;R^a_s+\gamma\min\left\{(\phi r)(s_0),G(s_0)\right\}-(\phi r)(s_0)\right)\right].
\end{align*}
We now observe that $s_k$ can be described in terms of an inner product. Indeed, using the iterated law of expectations we have that
\begin{align*}
s_k&\equiv \mathbb{E}\left[\Phi_k(s_0)\left(\underset{a\in\mathcal{A}}{\max}\;R^a_s+\gamma\min\left\{(\Phi r)(s_0),G(s_0)\right\}-(\Phi r)(s_0)\right)\right]
\\&
= \mathbb{E}\left[\Phi_k(s_0)\left(\underset{a\in\mathcal{A}}{\max}\;R^a_s+\gamma\mathbb{E}\left[\min\left\{(\Phi r)(s_0),G(s_0)\right\}|s_0\right]-(\Phi r)(s_0)\right)\right]
\\&= \mathbb{E}\left[\Phi_k(s_0)\left(\underset{a\in\mathcal{A}}{\max}\;R^a_s+\gamma P\min\left\{(\Phi r)(s_0),G(s_0)\right\}-(\Phi r)(s_0)\right)\right]
\\&= \left\langle\Phi_k,F(\Phi r)-F(\Phi r)\right\rangle.
\end{align*}
\end{proof}
\begin{proof}[Proof of Theorem \ref{theorem_approx_prog_bounds}]
\noindent Step 5 enables us to use classic arguments for approximate dynamic programming. In particular, following step 5, Theorem \ref{theorem_approx_prog_bounds} follows directly from Theorem 2 in (Tsitsiklis \& Van Roy,
1999) with only a minor adjustment in substituting the $\max$ operator with $\min$.
\end{proof}

\end{document}